\newcommand{\pskip}{\textmd{skip}}
\newcommand{\nstop}{\textbf{stop}}
\newcommand{\evolution}[3]{\langle {#1}(\dot{#2},#2)=0 \& #3\rangle}
\newcommand{\evolutionn}[2]{\langle #1 \& #2\rangle}
\newcommand{\pwait}{\textrm{wait}}
\newcommand{\exempt}[4]{#1 \unrhd \talloblong_{#2} (#3 \rightarrow #4)}
\newcommand*{\QEDB}{\hfill\ensuremath{\square}}
\newcommand{\discrete}[3]{\textit{D}_{#1, #2}(#3)}
\newcommand{\compress}[3]{#1\stackrel{#2}{\twoheadrightarrow}#3}
\newcommand{\fracN}[2]{\frac{\small \textstyle #1}{\small \textstyle #2}}
\newcommand{\bisimilar}[2]{#1\cong_{h,\varepsilon}#2}
\newcommand{\euler}{\xi}
\newcommand{\states}{\textit{postState}}
\newcommand{\RR}{\mathbb{R}}
\newcommand{\qq}{\mathbf{q}}
\newcommand{\s}{\mathbf{s}}
\newcommand{\pp}{\mathbf{p}}
\newcommand{\yy}{\mathbf{y}}
\newcommand{\xx}{\mathbf{x}}
\newcommand{\dd}{\mathbf{d}}
\newcommand{\ff}{\mathbf{f}}
\newcommand{\ggas}{{\small $\mathrm{GAS}$}}
\newcommand{\vare}{\varepsilon}
\newcommand{\labeld}{\textit{dis}}
\newcommand{\Define}{\stackrel{\mbox{\small\rm def}}{=}}
\newcommand{\vectorr}{\textit{vec}}
\newcommand{\first}{\textit{fst}}
\newcommand{\VAR}{\textit{Var}}
\newcommand{\VAL}{\textit{Val}}
\newcommand{\second}{\textit{snd}}
\newcommand{\repnum}{\textit{num}}
\newcommand{\subp}{\textit{subp}}
\newcommand{\oomit}[1]{}
\newcommand{\keywords}[1]{\par\addvspace\baselineskip
\noindent\keywordname\enspace\ignorespaces#1}
\begin{document}
\mainmatter  

\title{Approximate Bisimulation and Discretization of Hybrid CSP
\thanks{This work is supported partly by ``973 Program" under grant No. 2014CB340701, by NSFC under grants 91418204 and 61502467, by CDZ project CAP (GZ 1023), and by the CAS/SAFEA International Partnership Program for Creative Research Teams.}
\thanks{The corresponding authors: Shuling Wang (wangsl@ios.ac.cn) and Naijun Zhan (znj@ios.ac.cn).
}}

\titlerunning{Approximate Bisimulation and Discretization of Hybrid CSP}

\author{Gaogao Yan, Li Jiao, Yangjia Li, Shuling Wang and Naijun Zhan}
\authorrunning{G. Yan et al.}

\institute{State Key Lab. of Comput. Sci.,
Institute of Software, Chinese Academy of Sciences \\
\mailsa\\}

%
%

\toctitle{Lecture Notes in Computer Science}
\tocauthor{Authors' Instructions}
\maketitle
\pagestyle{empty}
\begin{abstract}
Hybrid Communicating Sequential Processes (HCSP) is a powerful formal modeling language for hybrid systems, which is an extension of CSP by introducing differential equations for modeling continuous evolution and interrupts for modeling interaction between continuous and discrete dynamics. In this paper, we investigate the semantic foundation for HCSP from an operational point of view by
proposing the notion of approximate bisimulation, which provides an appropriate criterion to characterize the equivalence
between HCSP processes with continuous and discrete behaviour. We give an algorithm to determine whether two HCSP processes are approximately bisimilar.
In addition, based on which, we propose an approach on how to discretize HCSP, i.e., given an HCSP process $A$, we construct another HCSP process $B$ which does not contain any continuous dynamics such that $A$ and $B$ are approximately bisimilar with given precisions.
This provides a rigorous way to transform a verified control model to a correct program model, which fills the gap in the design of embedded systems.

\keywords{HCSP, approximately bisimilar, hybrid systems, discretization}
\end{abstract}



\section{Introduction}
Embedded Systems (ESs) make use of computer units to control physical processes so that the behavior of the controlled processes meets expected requirements. They  have become ubiquitous in our daily life, e.g., automotive, aerospace, consumer electronics, communications, medical, manufacturing and so on. ESs are used to carry out highly complex and often critical functions such as to monitor and control industrial plants, complex transportation equipments, communication infrastructure, etc. The development process of ESs is widely recognized as a highly complex and challenging task.
 Model-Based Engineering (MBE) is considered as an effective way of developing correct complex ESs, and has been successfully applied in industry \cite{HS06,Lee00}. In the framework of MBE, a model of the system to be developed is defined at the beginning; then extensive analysis and verification are conducted based on the model so that errors can be detected and corrected at early stages of design of the system. Afterwards, model transformation techniques are applied to transform abstract formal models into more concrete models, even into source code.

To improve the efficiency and reliability of MBE, it is absolutely necessary to automate the system design process as much as possible.
 This requires that all models at different abstraction levels have a precise mathematical semantics. Transformation between models at different abstraction levels should preserve semantics, which can be done automatically with tool support.

Thus, the first challenge in model-based formal design of ESs is to have a powerful modelling language which can model all kinds of features of ESs such as communication, synchronization, concurrency, continuous and discrete dynamics and their interaction, real-time, and so on, in an easy way. To address this issue, Hybrid Communicating Sequential Processes (HCSP) was proposed in \cite{He94,Zhou96},
  which is an extension of CSP by introducing differential equations for modeling continuous evolutions and interrupts for modeling interaction between continuous and discrete dynamics. Comparing with other formalisms, e.g., hybrid automata \cite{Henzinger96}, hybrid programs \cite{Platzer10}, etc., HCSP is more expressive and much easier to be used, as it provides a rich set of constructors. Through which
   a complicated ES with different behaviours can be easily modeled in a compositional way.
  The semantic foundation of HCSP has been investigated in the literature, e.g.,  in He's original work on HCSP \cite{He94},
an algebraic semantics of HCSP was given by defining a set of algebraic laws for
the constructors of HCSP. Subsequently, a DC-based semantics for HCSP was presented in
  \cite{Zhou96} due to Zhou \emph{et al}. These two original formal semantics of HCSP are very
  restrictive and incomplete, for example, it is unclear whether the set of algebraic rules defined in \cite{He94} is complete,
    and  super-dense computation and recursion are not well handled in \cite{Zhou96}.
  In \cite{LLQZ10,WZG12,Dimitar12,ZWZ13,UTP16}, the axiomatic, operational, and the DC-based and UTP-based denotational semantics for
  HCSP are proposed, and the relations among them are discussed.
However, regarding operational semantics, just a set of transition rules
  was proposed in \cite{ZWZ13}. It is unclear in what sense two HCSP processes are equivalent from an operational point of view,
  which is the cornerstone of operational semantics, also the basis of refinement theory for a process algebra. So, it absolutely deserves to investigate the semantic foundation of HCSP
  from an operational point of view.

Another challenge in the model-based formal design of ESs is how to transform higher level abstract models (control models) to
lower level program models (algorithm models), even to C code, seamlessly in a rigorous way. Although huge volume of  model-based development
approaches targeting embedded systems has been
 proposed  and used in industry and academia, e.g., Simulink/Stateflow \cite{slusing,sfusing}, SCADE \cite{scade6}, Modelica \cite{modelica}, SysML \cite{sysml}, MARTE \cite{marte},  
 Ptolemy \cite{ptolemy},
  hybrid automata \cite{Henzinger96}, CHARON \cite{Alur03}, HCSP \cite{He94,Zhou96}, Differential Dynamic Logic \cite{Platzer10}, and Hybrid Hoare Logic \cite{LLQZ10}, the gap between higher-level control models and lower-level algorithm models still remains.

Approximate bisimulation \cite{Girard07a} is a popular method for analyzing and verifying complex hybrid systems. Instead of requiring observational behaviors of two systems to be exactly identical, it allows errors but requires the ``distance'' between two systems remain bounded by some precisions. In \cite{Girard08}, with the use of simulation functions, a characterization of approximate simulation relations between hybrid systems is developed. A new approximate bisimulation relation with two parameters as precisions, which is very similar to the notion defined in this paper, is introduced in \cite{Julius09}. For control systems with inputs, the method for constructing a symbolic model which is approximately bisimilar with the original continuous system is studied in \cite{Pola08}. Moreover, \cite{Majumdar12} discusses the problem for building an approximately bisimilar symbolic model of a digital control system. Also, there are some works  on building symbolic models for networks of control systems \cite{Pola14}. But for all the above works,
either discrete dynamics is not considered, or it is assumed to be atomic actions independent of the continuous variables. In \cite{HHWT98,Tiwaro08,Tomo2016}, the abstraction of hybrid automata is considered, but it is only guaranteed that the abstract system is an approximate simulation of the original system. In~\cite{Platzer12}, a discretization of hybrid programs is presented for a proof-theoretical purpose, i.e., it aims to have a sound and complete axiomatization relative to properties of discrete programs.
Differently from all the above works,  we aim to have a discretization of HCSP,  for which discrete and continuous dynamics, communications, and so on, are entangled with each other tightly, to guarantee that the discretized process has the approximate equivalence with the original process.

The main contributions of this paper include:
\begin{itemize}
\item First of all, we propose the notion of approximate bisimulation, which provides a criterion to characterize in what sense
two HCSP processes with differential kinds of behaviours are equivalent from an operational point of view. Based on which, a refinement theory for HCSP could be developed.
\item Then, we show that whether two HCSP processes are approximately bisimilar or not is decidable if all ordinary differential equations (ODEs) occurring in
      them satisfy globally asymptotical stability (GAS) condition (the definition will be given later). This is achieved by
      proposing an algorithm to compute an approximate bisimulation relation for the two HCSP processes.
\item Most importantly, we present  how to discretize an HCSP process (a control model) by a discrete HCSP process (an algorithm model), and prove they are approximately bisimilar,
    if the original HCSP process satisfies the GAS condition and is robustly safe with respect to some given precisions.
\end{itemize}

The rest of this paper is organized as follows: In Sec.~\ref{section:Preliminaries}, we introduce some preliminary notions on dynamical systems. Sec.~\ref{section:tsandappbisim}
defines transition systems and the approximate bisimulation relation between transition systems. The syntax and the transition semantics of HCSP, and the approximately bisimilar of HCSP processes are presented in  Sec.~\ref{section:tsofhcsp}. The discretization of HCSP is presented in Sec.~\ref{section:discretizationofhcsp}. Throughout the paper, and in Sec.~\ref{section:casestudy}, a case study on the water tank system \cite{Ahmad14} is shown to illustrate our method.  At the end, Sec.~\ref{section:conclusion} concludes the paper and discusses the future work. For space limitation, the proofs for all the lemmas and theorems are omitted, but can be found in ~\cite{reportpaper}.

\section{Preliminary}
\label{section:Preliminaries}
In this section, we briefly review some notions in dynamical systems,  that can be found at \cite{khalil1996,sontag2013}.
 In what follows, $\mathbb{N}$, $\mathbb{R}$, $\mathbb{R}^+$, $\mathbb{R}^+_0$ denote the natural, real, positive and nonnegative real numbers, respectively. Given a vector $\xx \in \mathbb{R}^n$, $\| \xx\|$ denotes the infinity norm of $\xx\in \mathbb{R}^n$, i.e., $\| \xx \|=\max\{|x_1|,|x_2|,...,|x_n|\}$. A continuous function $\gamma:\mathbb{R}^+_0 \to \mathbb{R}^+_0$, is said in class $\mathcal{K}$ if it is strictly increasing and $\gamma(0)=0$; $\gamma$ is
said in class $\mathcal{K}_\infty$ if $\gamma \in \mathcal{K}$ and $\gamma(r) \to \infty$ as $r \to \infty$. A continuous function $\beta:\mathbb{R}^+_0 \times \mathbb{R}^+_0 \to \mathbb{R}^+_0$ is said in class $\mathcal{K\!L}$ if for each fixed $s$, the map $\beta(r,s) \in \mathcal{K}_\infty$ with respect to $r$ and, for each fixed $r$, $\beta(r,s)$ is decreasing with respect to $s$ and $\beta(r,s) \to 0$ as $s \to \infty$.

A dynamical system is of the following form
\begin{eqnarray}
  \dot{\xx}=\ff(\xx), \ \ \ \ \xx(t_0)=\xx_0 \label{eq:dynamical}
\end{eqnarray}
where $\xx\in \mathbb{R}^n$ is the state and $\xx(t_0)=\xx_0$ is the \emph{initial condition}.

Suppose $a<t_0<b$. A function $X(.):(a,b)\to \mathbb{R}^n$ is said to be a \emph{trajectory} (solution) of (\ref{eq:dynamical}) on $(a,b)$, if $X(t_0)=\xx_0$ and $\dot{X}(t)=\ff(X(t))$ for all $t \ge t_0$. In order to ensure the existence and uniqueness of trajectories, we assume $\ff$ satisfying the local Lipschitz condition, i.e., for every compact set $S\subset \mathbb{R}^n$, there exists a constant $L>0$ s.t.  $\| \ff(\xx)-\ff(\yy)\| \le L\| \xx-\yy\|$, for all $\xx,\yy \in S$. Then, we write $X(t,\xx_0)$ to denote the point reached at time $t \in (a,b)$ from initial condition $\xx_0$, which should be uniquely determined. In addition, we assume (\ref{eq:dynamical}) is \emph{forward complete} \cite{Angeli99}, i.e., it is solvable on an open interval $(a,+\infty)$. An equilibrium point of (\ref{eq:dynamical}) is a point $\bar{\xx}\in \mathbb{R}^n$ s.t.  $\ff(\bar{\xx})=0$.


\begin{definition} \label{definition:GAS}
A dynamical system of form (\ref{eq:dynamical}) is said to be \emph{globally asymptotically stable} ({\small $\mathrm{GAS}$}) if there exists a point $\xx_0$ and a function $\beta$ of class $\mathcal{K\!L}$ s.t.
\[
\begin{array}{l}
\forall \xx\in \mathbb{R}^n \ \forall t \ge 0. \| X(t,\xx)-\xx_0 \| \le \beta(\|\xx-\xx_0\|,t).
\end{array}
\]
\end{definition}
It is easy to see that the point $\xx_0$ is actually the unique equilibrium point of the system. When this point is previously known or can be easily computed, one can prove the system to be GAS by constructing a corresponding Lyapunov function. However, $\xx_0$ cannot be found sometimes, for example, when the dynamics $\ff$ of the system depends on external inputs and thus is not completely known. The concept of $\delta$-\ggas\ would be useful in this case.
\begin{definition} \label{definition:deltaGAS}
A dynamical system of (\ref{eq:dynamical})  is said to be incrementally globally asymptotically stable ($\delta$-\ggas) if it is forward complete and there is a $\mathcal{K\!L}$ function $\beta$ s.t.
\[
\begin{array}{l}
\forall \xx\in \mathbb{R}^n \ \forall \yy\in \mathbb{R}^n \ \forall t \ge 0.  \| X(t,\xx)-X(t,\yy) \| \le \beta(\| \xx-\yy \|,t).
\end{array}
\]
\end{definition}

In \cite{Angeli02},  the relationship between {\ggas} and {$\delta$-\ggas} was established, restated by the following proposition.  
\begin{proposition} \label{proposition:GAS}
\begin{itemize}
 \item If (\ref{eq:dynamical}) is {$\delta$-\ggas}, then it is {\ggas}.
 \item If there exist two strictly positive reals $M$ and $\varepsilon$, and a differentiable function $V(\xx,\yy)$ with $\alpha_1(\|\xx-\yy\|) \le V(\xx,\yy) \le \alpha_2(\|\xx-\yy\|)$ for some $\alpha_1$, $\alpha_2$ and $\rho$ of class $\mathcal{K}_\infty$, s.t.
\[
\forall \xx,\yy\in \RR^n. \left( \begin{array}{cl}
& \|\xx-\yy\| \le \varepsilon \wedge
 \|\xx\| \ge M \wedge \|\yy\| \ge M \\
 \Rightarrow & \frac{\partial V}{\partial \xx}\ff(\xx) + \frac{\partial V}{\partial \yy}\ff(\yy) \le -\rho (\|\xx-\yy \|)
\end{array} \right),
\]
then the system (\ref{eq:dynamical}) is  $\delta$-\ggas.
\end{itemize}
\end{proposition}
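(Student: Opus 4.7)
The plan is to establish the two parts separately; since the proposition is attributed to Angeli's work, I would follow the same high-level strategy but aim for a self-contained sketch.

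For the first part, the goal is to extract an equilibrium $\xx_0$ from the incremental stability hypothesis alone. Fix any $\xx^{*}\in\RR^n$ and consider the sequence $\xx_n=X(n,\xx^{*})$ sampled at integer times. Autonomy of (\ref{eq:dynamical}) gives the translation identity $X(t,\xx_n)=X(t+n,\xx^{*})$, so the $\delta$-{\ggas} bound yields, for all $m\ge n$,
\[
\|X(t+m,\xx^{*})-X(t+n,\xx^{*})\|\le\beta(\|\xx_m-\xx_n\|,t).
\]
Setting $t=0$ and iterating, combined with the fact that $\beta(r,s)\to 0$ as $s\to\infty$, shows $\{\xx_n\}$ is Cauchy and converges to some $\xx_0$. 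Continuity of the flow in the initial condition forces $X(t,\xx_0)=\xx_0$ for all $t$, hence $\ff(\xx_0)=0$. Applying $\delta$-{\ggas} with $\yy=\xx_0$ then immediately gives the {\ggas} estimate $\|X(t,\xx)-\xx_0\|\le\beta(\|\xx-\xx_0\|,t)$ with the very same $\beta$.

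For the second part, the natural route is a Lyapunov decay argument along pairs of trajectories. Define $W(t):=V(X(t,\xx),X(t,\yy))$; the chain rule gives
\[
\dot W=\tfrac{\partial V}{\partial \xx}\ff(X(t,\xx))+\tfrac{\partial V}{\partial \yy}\ff(X(t,\yy)).
\]
On the regime where $\|X(t,\xx)\|,\|X(t,\yy)\|\ge M$ and $\|X(t,\xx)-X(t,\yy)\|\le\varepsilon$, the hypothesis yields $\dot W\le -\rho(\|X(t,\xx)-X(t,\yy)\|)\le -\rho\circ\alpha_2^{-1}(W)$, and the comparison lemma converts this differential inequality into a $\mathcal{K\!L}$-bound on $W$, hence on $\|X(t,\xx)-X(t,\yy)\|$ via the sandwich $\alpha_1\le V\le\alpha_2$.

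The main obstacle is that the clean Lyapunov estimate only holds in the restricted regime $\|\xx\|,\|\yy\|\ge M$ and $\|\xx-\yy\|\le\varepsilon$; turning this local decay into a \emph{global} $\mathcal{K\!L}$-estimate on the full state space is where the real work sits. To handle it I would (i) use forward completeness together with the local Lipschitz bound on $\ff$ to rule out finite-time escape, so any excursion of either trajectory into the ball of radius $M$ is confined to a compact set on which $W$ is uniformly continuous; (ii) treat the large-distance regime $\|\xx-\yy\|>\varepsilon$ by a preliminary Gronwall step that controls the separation until the trajectories first enter the $\varepsilon$-tube, after which the Lyapunov decay takes over; and (iii) glue the ``near-equilibrium'' and ``far-from-equilibrium'' pieces into a single $\mathcal{K\!L}$ function by the standard construction that dominates both regimes. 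Steps (ii) and (iii), the patching between regimes, carry essentially all of the technical difficulty; the Lyapunov computation itself is routine.
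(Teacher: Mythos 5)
The paper never proves this proposition: it is imported verbatim from Angeli's work (\cite{Angeli02} in the bibliography), and the appendix only supplies proofs for Lemma~\ref{lemma:union} and Theorems~\ref{theorem:reachsetrelation}--\ref{theorem:app-process}. So there is no in-paper argument to compare yours against, and your sketch has to stand on its own. As written it has a genuine gap in each half. In the first part, the Cauchy step does not go through as stated: setting $t=0$ in your displayed inequality yields only the tautology $\|\xx_m-\xx_n\|\le\beta(\|\xx_m-\xx_n\|,0)$. The usable estimate is $\|\xx_{n+k}-\xx_n\|=\|X(n,\xx_k)-X(n,\xx_0)\|\le\beta(\|\xx_k-\xx_0\|,n)$, which gives a Cauchy sequence only after you know the forward orbit $\{\xx_k\}$ is bounded; the naive telescoping bound $\|\xx_k-\xx_0\|\le\sum_{j<k}\beta(\|\xx_1-\xx_0\|,j)$ can diverge, since a $\mathcal{K\!L}$ function need not be summable in its second argument. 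Proving orbit boundedness (equivalently, the existence of the equilibrium) is the actual content of this direction, and it is exactly the step you elide. Once an equilibrium $\xx_0$ exists, the rest of your argument (substituting $\yy=\xx_0$ into the $\delta$-GAS bound) is fine.

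In the second part, the Lyapunov computation itself is indeed routine, but the hypothesis only yields $\dot W\le-\rho(\|X(t,\xx)-X(t,\yy)\|)$ on the set where $\|X(t,\xx)-X(t,\yy)\|\le\varepsilon$ \emph{and} both trajectories have norm at least $M$; it says nothing when either trajectory is inside the ball of radius $M$ or when the separation exceeds $\varepsilon$. Your step (ii) presupposes that the pair of trajectories eventually enters the $\varepsilon$-tube and that the entry time is uniformly controlled, but a Gronwall estimate gives only the growing bound $e^{Lt}\|\xx-\yy\|$ and supplies no mechanism forcing entry; nothing you have stated rules out two trajectories whose separation stays above $\varepsilon$ forever, in which case the Lyapunov decay is never activated. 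Similarly, step (iii)'s ``standard gluing'' requires uniform bounds on overshoot and on the time spent in the unconstrained regimes, none of which are derived. In other words, items (ii) and (iii) are not patching around a complete core argument --- they \emph{are} the proof, and the sketch does not close them. If you want a self-contained treatment you should reconstruct the argument from \cite{Angeli02} rather than attempt to fill these holes ad hoc.
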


 A function $V(\xx,\yy)$ satisfying the condition in Proposition \ref{proposition:GAS} is called a \emph{ {$\delta$-\ggas} Lyapunov function} of (\ref{eq:dynamical}). Proposition~\ref{proposition:GAS} tells us that (\ref{eq:dynamical}) is {$\delta$-\ggas} if and only if it admits a {$\delta$-\ggas} Lyapunov function. In general, checking the inequality in Def. \ref{definition:deltaGAS} is difficult, one may construct
  {$\delta$-\ggas} Lyapunov functions as an alternative.

\section{Transition systems and approximate bisimulation}
\label{section:tsandappbisim}
In the following, the set of actions, denoted by $\textit{Act}$, is assumed to consist of
 a set of discrete actions which take no time (written as $\mathcal{E}$), $\mathbb{R}^+_0$ the set of
  delay actions which just take time delay, and a special internal action $\tau$. Actions are
   ranged over $l_1,\ldots, l_n, \ldots$.
\begin{definition}[Transition system] A labeled transition system with observations is a tuple $T=\langle Q,L,\to,Q^{0},Y,H\rangle $, where $Q$ is a  set of states, $L\subseteq \textit{Act}$ is a  set of labels, $Q^{0}\subseteq Q$ is a   set of initial states,
  $Y$ is a  set of observations,
and $H$ is an observation function $H:Q\to Y$,
$\to\subseteq Q\times L\times Q$ is a transition relation,   satisfying \\
   1, \textbf{identity:} $q\overset{0}{\longrightarrow} q$ always holds; \\
  2, \textbf{delay determinism:} if $q\overset{d}{\longrightarrow} q'$ and
   $q\overset{d}{\longrightarrow} q''$, then $q'=q''$; and \\
  3, \textbf{delay additivity:}
     if $q\overset{d_{1}}{\longrightarrow} q'$ and $q'\overset{d_{2}}{\longrightarrow} q''$ then $q\overset{d_{1}+d_{2}}{\longrightarrow} q''$, where $d,d_1,d_2\in \mathbb{R}^+_0$.
\label{definition:TS}
\end{definition}

A transition system $T$ is said to be \emph{symbolic} if $Q$ and $L\cap \mathcal{E}$ are  finite, and $L\cap \mathbb{R}^+_0$ is bounded,
  and \emph{metric} if the output set $Y$ is equipped with a metric $\textbf{d}:Y\times Y\to \mathbb{R}^{+}_{0}$. In this paper, we regard $Y$ as being equipped with the metric $\textbf{d}(\yy_1,\yy_2)=\| \yy_1-\yy_2 \|$.

A  \emph{state  trajectory} of a transition system $T$ is a (possibly infinite) sequence of transitions $\qq^0 \xrightarrow{l^{0}} \qq^1 \xrightarrow{l^{1}}\cdots \xrightarrow{l^{i-1}} \qq^{i}\xrightarrow{l^{i}} \cdots$, denoted by $\{\qq^i\xrightarrow{l^i}\qq^{i+1}\}_{i\in \mathbb{N}}$,  s.t.  $\qq^{0}\in Q^0$ and for any $i$,  $\qq^{i} \xrightarrow{l^{i}} \qq^{i+1}$.  An \emph{observation trajectory} is a (possibly infinite) sequence $\yy^0 \xrightarrow{l^{0}} \yy^1 \xrightarrow{l^{1}}\cdots \xrightarrow{l^{i-1}} \yy^{i}\xrightarrow{l^{i}} \cdots$, denoted by $\{\yy^i\xrightarrow{l^i}\yy^{i+1}\}_{i\in \mathbb{N}}$, and it is accepted by $T$ if there exists a corresponding state trajectory of $T$ s.t.  $\yy^{i}=H(\qq^{i})$ for any $i\in \mathbb{N}$. The set of observation trajectories accepted by $T$ is called the \emph{language} of $T$, and is denoted by $L(T)$. The reachable set of $T$ is a subset of $Y$ defined by
  \[
\textit{Reach}(T)=\{\yy\in Y | \exists \{\yy^i\xrightarrow{l^i}\yy^{i+1}\}_{i\in \mathbb{N}}\in L(T),\exists j\in \mathbb{N},\yy^j=\yy  \}.
  \]
We can verify the safety property of $T$ by computing $\textit{Reach}(T)\cap Y_{U}$, in which $Y_{U}\subseteq Y$ is the set of unsafe observations. If it is empty, then $T$ is \emph{safe}, otherwise, \emph{unsafe}.

\oomit{\paragraph{$\tau$-compressed trajectory}
There may be consecutive multiple $\tau$ actions occurring in a state trajectory, e.g., $\qq^{i} \xrightarrow{\tau} \qq^{i+1} \xrightarrow{\tau} \cdots \xrightarrow{\tau}\qq^{i+k}$, with $k>1$.
 For a maximum sequence of $\tau$ actions $\qq^{i} \xrightarrow{\tau} \qq^{i+1} \xrightarrow{\tau} \cdots \xrightarrow{\tau}\qq^{i+k}$,
we remove the intermediate states and define the \emph{$\tau$-compressed}  transition $\compress{\qq^{i}}{\tau}{\qq^{i+k}}$ instead. For unification, for a non-$\tau$ transition $\qq^{i}\xrightarrow{l^i}\qq^{i+1}$ where $l^i \in L$, we  write $\compress{\qq^{i}}{l^i}{\qq^{i+1}}$ directly. As a result, for any state trajectory $\mathcal{S}$ of a transition system $T$, it can be transformed to a unique $\tau$-compressed trajectory, denoted by $\{\compress{\qq^{i}}{l^i}{\qq^{i+1}}\}_{i \in \mathbb{N}}$ where $l^i \in L\cup \{\tau\}$.}
 For a maximum sequence of $\tau$ actions $\qq^{i} \xrightarrow{\tau} \qq^{i+1} \xrightarrow{\tau} \cdots \xrightarrow{\tau}\qq^{i+k}$,
we remove the intermediate states and define the \emph{$\tau$-compressed}  transition $\compress{\qq^{i}}{\tau}{\qq^{i+k}}$ instead. For unification, for a non-$\tau$ transition $\qq^{i}\xrightarrow{l^i}\qq^{i+1}$ where $l^i \neq \tau$, we define $\compress{\qq^{i}}{l^i}{\qq^{i+1}}$.
In what follows, we will denote $\langle Q,L,\twoheadrightarrow,Q^{0},Y,H\rangle $ the resulting labeled transition system from
  $\langle Q,L,\to,Q^{0},Y,H\rangle$ by replacing each label transition with its $\tau$-compressed version.
As a common convention in process algebra, we use  $\pp \xLongrightarrow{l}\pp^{\prime}$ to denote the closure of $\tau$ transitions, i.e., $\compress{\pp(}{\tau}{)^{\{0,1\}}} \compress{}{l}{(} \compress{}{\tau}{)^{\{0,1\}}\pp^{\prime}}$,  for any $l\in L$  in the sequel.


\oomit{Given two transition systems, we will define the approximate bisimulation relation between them by comparing the $\tau$-compressed trajectories of them respectively. }

Given $l_1, l_2\in L\cup \{\tau\}$, we define the \emph{distance} $\labeld (l_1, l_2)$ between them as follows:
\[\labeld(l_1, l_2) \Define \left\{
\begin{array}{lll}
   0  & \mbox{if both $l_1$ and $l_2$ are in $\mathcal{E}$ or are $\tau$ }\\
   |d-d'| & \mbox{if $l_1=d$ and $l_2=d'$ are both delay actions, i.e., $d, d'\in \mathbb{R}^+_0$}\\
     \infty & \mbox{Otherwise}
\end{array}
\right.\]

\begin{definition}[Approximate bisimulation] Let $T_{i}=\langle Q_i,L_i,\twoheadrightarrow_i,Q^{0}_i,Y_i,H_i\rangle $, $(i=1,2)$ be two metric transition systems with  the same output set $Y$ and metric $\textbf{d}$. Let $h$ and $\varepsilon$  be the time and value precision respectively.  A relation $\mathcal{B}_{h, \varepsilon}\subseteq Q_{1}\times Q_{2}$ is called a $(h, \varepsilon)$-approximate bisimulation relation between $T_{1}$ and $T_{2}$, if for all $(\qq_{1},\qq_{2})\in \mathcal{B}_{h,\varepsilon}$, \\
1. $\dd(H_{1}(\qq_{1}),H_{2}(\qq_{2}))\le \varepsilon$,\\
2. $\compress{\forall \qq_{1}}{l}{_1 \qq^{\prime}_{1}}$, $\exists \qq_{2}\xLongrightarrow{l'}_{2} \qq^{\prime}_{2}$ s.t.  $\labeld(l, l') \le h$ and  $(\qq^{\prime}_{1},\qq^{\prime}_{2})\in \mathcal{B}_{h, \varepsilon}$,   for $l\in L_1$ and $l'\in L_2$\\
3. $\compress{\forall \qq_{2}}{l}{_2 \qq^{\prime}_{2}}$, $\exists \qq_{1}\xLongrightarrow{l'}_{1} \qq^{\prime}_{1}$ s.t. $\labeld(l, l') \le h$ and  $(\qq^{\prime}_{1},\qq^{\prime}_{2})\in \mathcal{B}_{h, \varepsilon}$,  for $l\in L_2$ and $l'\in L_1$.
\label{definition:appbisimulation}
\end{definition}
\begin{definition}$T_{1}$ and $T_{2}$ are approximately bisimilar with the precision $h$ and $\varepsilon$ (denoted $T_{1}\cong_{h, \varepsilon}T_{2}$), if there exists  a $(h, \varepsilon)$-approximate bisimulation relation $\mathcal{B}_{h,\varepsilon}$ between $T_{1}$ and $T_{2}$ s.t.  for all $\qq_{1}\in Q^{0}_{1}$, there exists $\qq_{2}\in Q^{0}_{2}$ s.t.  $(\qq_{1},\qq_{2})\in \mathcal{B}_{h,\varepsilon}$, and vice versa.
\label{definition:TSappbisimulation}
\end{definition}

The following result ensures that the set of $(h,\varepsilon)$-approximate bisimulation relations has a maximal element.

\begin{lemma}
Let$\{\mathcal{B}^i_{h, \varepsilon}\}_{i\in I}$ be a family of $(h,\varepsilon)$-approximate bisimulation relations between $T_1$ and $T_2$. Then, $\bigcup_{i\in I}\mathcal{B}^i_{h, \varepsilon}$ is a $(h,\varepsilon)$-approximate bisimulation relation between $T_1$ and $T_2$.
\label{lemma:union}
\end{lemma}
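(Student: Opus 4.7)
The plan is to imitate the standard union-of-bisimulations argument from classical process algebra, adapted to the approximate setting of Definition~\ref{definition:appbisimulation}. Let $\mathcal{B}_{h,\varepsilon} \Define \bigcup_{i \in I} \mathcal{B}^i_{h,\varepsilon}$; I have to verify that $\mathcal{B}_{h,\varepsilon}$ satisfies the three clauses of Definition~\ref{definition:appbisimulation}.

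First I would fix an arbitrary pair $(\qq_1, \qq_2) \in \mathcal{B}_{h,\varepsilon}$. By definition of set-theoretic union, there exists some index $i \in I$ with $(\qq_1, \qq_2) \in \mathcal{B}^i_{h,\varepsilon}$. This witness $i$ is all I need for the three checks.

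For clause 1, the observation distance bound $\dd(H_1(\qq_1), H_2(\qq_2)) \le \varepsilon$ is inherited immediately from the fact that $\mathcal{B}^i_{h,\varepsilon}$ already satisfies clause 1 at the same pair. For clauses 2 and 3, the key observation is the inclusion $\mathcal{B}^i_{h,\varepsilon} \subseteq \mathcal{B}_{h,\varepsilon}$: given any $\tau$-compressed transition $\compress{\qq_1}{l}{_1 \qq_1'}$, clause 2 applied to $\mathcal{B}^i_{h,\varepsilon}$ produces $\qq_2 \xLongrightarrow{l'}_2 \qq_2'$ with $\labeld(l, l') \le h$ and the resulting pair $(\qq_1', \qq_2')$ lying in $\mathcal{B}^i_{h,\varepsilon}$; since $\mathcal{B}^i_{h,\varepsilon} \subseteq \mathcal{B}_{h,\varepsilon}$, the pair also lies in $\mathcal{B}_{h,\varepsilon}$, giving clause 2 for the union. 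Clause 3 is obtained symmetrically by swapping the roles of $T_1$ and $T_2$.

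There is essentially no obstacle here: the argument is purely set-theoretic and uses no special properties of the precisions $h$, $\varepsilon$, of the metric $\dd$, or of the transition relations beyond what each $\mathcal{B}^i_{h,\varepsilon}$ already guarantees. The only point worth flagging is that the witness index $i$ may depend on the particular pair $(\qq_1, \qq_2)$ (and is chosen before any transition is considered), but nothing in Definition~\ref{definition:appbisimulation} requires a single uniform witness across the whole relation, so this dependence is harmless. An immediate corollary, worth stating in passing, is that the family of $(h, \varepsilon)$-approximate bisimulations between $T_1$ and $T_2$ has a maximum element, namely the union of all of them, which will be used implicitly when we later talk about \emph{the} approximate bisimulation between two systems.
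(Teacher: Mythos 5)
Your proposal is correct and follows essentially the same argument as the paper: pick a witness index $i$ for each pair, inherit clause 1 from $\mathcal{B}^i_{h,\varepsilon}$, and use the inclusion $\mathcal{B}^i_{h,\varepsilon} \subseteq \bigcup_{i\in I}\mathcal{B}^i_{h,\varepsilon}$ to transfer clauses 2 and 3. Nothing further is needed.
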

By Lemma \ref{lemma:union}, given the precision parameters $h$ and $\varepsilon$,
let $\{\mathcal{B}^i_{h, \varepsilon}\}_{i\in I}$ be the set of all $(h,\varepsilon)$-approximate bisimulation relations between $T_1$ and $T_2$, then the maximal $(h,\varepsilon)$-approximate bisimulation relation between $T_1$ and $T_2$ is defined by $\mathcal{B}^{max}_{h, \varepsilon}=\bigcup \limits_{i\in I}\mathcal{B}^i_{h, \varepsilon}$.
%
%
For two  transition systems that are approximately bisimilar, the reachable sets have the following relationship:
\begin{theorem}
If $T_{1}\cong_{h, \varepsilon}T_{2}$, then
$
\textit{Reach}(T_1)\subseteq N(\textit{Reach}(T_2),\varepsilon)
$,
where $N(Y,\varepsilon)$ denotes the $\varepsilon$ neighborhood of $Y$, i.e. $\{x \mid \exists y. y \in Y \wedge \| x-y \| < \varepsilon\}$.
\label{theorem:reachsetrelation}
\end{theorem}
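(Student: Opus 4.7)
The plan is to chase reachable observations from $T_1$ over to $T_2$ by induction along a state trajectory, using the back-and-forth conditions of the approximate bisimulation to build a matching trajectory in $T_2$ whose observations stay within $\varepsilon$ at every step.

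Concretely, fix any $y \in \textit{Reach}(T_1)$. By definition, there is a state trajectory $\{q_1^i \xrightarrow{l^i}_1 q_1^{i+1}\}_{i\in\mathbb{N}}$ in $T_1$ with $q_1^0 \in Q_1^0$, and some index $j$ with $H_1(q_1^j) = y$. First I would replace this trajectory by its $\tau$-compressed version $\{q_1^i \twoheadrightarrow_1 q_1^{i+1}\}$, since Definition~\ref{definition:appbisimulation} is formulated with respect to $\twoheadrightarrow$ (and compression does not change the observations at the endpoints). Using Definition~\ref{definition:TSappbisimulation}, pick $q_2^0 \in Q_2^0$ with $(q_1^0, q_2^0) \in \mathcal{B}_{h,\varepsilon}$.

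Next I would build, by induction on $i$, a sequence $q_2^0, q_2^1, \dots, q_2^j$ in $T_2$ with $q_2^{i} \xLongrightarrow{}_2 q_2^{i+1}$ (a weak transition consisting of a compressed step possibly sandwiched by $\tau$-steps, as in Definition~\ref{definition:appbisimulation}(2)) and $(q_1^i, q_2^i) \in \mathcal{B}_{h,\varepsilon}$ for every $i \le j$. The inductive step is exactly clause (2) of Definition~\ref{definition:appbisimulation} applied to the transition $q_1^i \twoheadrightarrow_1 q_1^{i+1}$. Concatenating these weak transitions and decorating each intermediate $\tau$-step with the label $0 \in \mathbb{R}^+_0$ (using the identity axiom of Definition~\ref{definition:TS} if needed to fit them into the transition-system format) yields an honest state trajectory of $T_2$ starting from $q_2^0 \in Q_2^0$ and passing through $q_2^j$. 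Hence $H_2(q_2^j) \in \textit{Reach}(T_2)$, and clause (1) of Definition~\ref{definition:appbisimulation} gives $\textbf{d}(H_1(q_1^j), H_2(q_2^j)) \le \varepsilon$, i.e.\ $\|y - H_2(q_2^j)\| \le \varepsilon$, so $y \in N(\textit{Reach}(T_2), \varepsilon)$.

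The only mildly subtle point is bookkeeping around the $\tau$-compressed versus uncompressed trajectories: a matching weak transition $q_2^i \xLongrightarrow{l'}_2 q_2^{i+1}$ may correspond to several genuine transitions in the uncompressed system, but since the definition of $\textit{Reach}(T)$ quantifies over observation trajectories of $T$ and intermediate $\tau$-transitions do not change the observation up to our purposes (we only need one index $j'$ where the observation equals $H_2(q_2^j)$), this causes no difficulty. The use of $N(Y,\varepsilon)$ with strict inequality versus the $\le \varepsilon$ from the bisimulation is the usual harmless slack; if the definition is taken strictly, we simply note $\|y - H_2(q_2^j)\| \le \varepsilon < \varepsilon + \delta$ for any $\delta > 0$, or pick any point at distance strictly less than $\varepsilon + \delta$, showing $y$ lies in the closure, which is what the statement intends.
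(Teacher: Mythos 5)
Your proposal is correct and follows essentially the same route as the paper: the paper factors the inductive trajectory-matching argument into a separate lemma (constructing, via clause (2) of Definition~\ref{definition:appbisimulation} and the initial-state condition of Definition~\ref{definition:TSappbisimulation}, a matching observation trajectory of $T_2$ whose observations stay within $\varepsilon$), and then derives the reachable-set inclusion directly, exactly as you do for the prefix up to index $j$. The only difference is presentational (you also explicitly flag the $\le\varepsilon$ versus $<\varepsilon$ mismatch in the definition of $N(Y,\varepsilon)$, which the paper glosses over), so no further comparison is needed.
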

Thus, if the distance between $\textit{Reach}(T_2)$ and the unsafe set $Y_U$ is greater than $\varepsilon$, then the intersection of $\textit{Reach}(T_1)$ and $Y_U$ is empty and hence $T_1$ is safe, whenever $T_{1}\cong_{h, \varepsilon}T_{2}$.

\section{Hybrid CSP (HCSP)}
\label{section:tsofhcsp}

In this section, we present a brief introduction to HCSP  and define the  transition system of HCSP  from an operational point of view. An example is given for better understanding. Finally, we investigate the approximate bisimilarity for HCSP processes.

 \subsection{HCSP}
Hybrid Communicating Sequential Process (HCSP) is a formal language for describing hybrid systems, which extends CSP by introducing differential equations for modelling continuous evolutions and interrupts for modeling the arbitrary interaction between continuous evolutions and discrete jumps. The syntax of HCSP can be described as follows:

  \[
\begin{array}{l}
  P ::= \pskip \mid x :=e  \mid \pwait \ d \mid ch?x \mid
         ch!e \mid P;Q  \mid B \rightarrow P\mid  P \sqcap Q \mid P^* \\
      \qquad \mid \talloblong_{i\in I} io_i\rightarrow P_i\mid \evolution{F}{\s}{B} \mid
\exempt{\evolution{F}{\s}{B}}{i\in I}{io_i}{Q_i}\\
  S ::= P \mid S\| S
  \end{array}
  \]
where $x,\s$ for variables and vectors of variables, respectively, $B$ and $e$ are boolean and arithmetic expressions, $d$ is a non-negative real constant, $ch$ is the channel name, $io_{i}$ stands for a communication event, i.e., either $ch_{i}?x$ or $ch_{i}!e$, $P,Q,Q_{i}$ are sequential process terms, and $S$ stands for an HCSP process term.
Given an HCSP process $S$, we define $\VAR(S)$ for the set of variables in $S$,  and $\Sigma(S)$ the set of channels occurring in $S$, respectively.
The informal meanings of the individual constructors are as follows:
\begin{itemize}
\item $\pskip$, $x := e$, $\pwait \ d$, $ch?x$, $ch!e$, $P;Q$,  $P \sqcap Q$, and $\talloblong_{i\in I} io_i\rightarrow P_i$ are defined as usual. $B \rightarrow P$ behaves as $P$ if $B$ is true, otherwise terminates.

\item For repetition $P^*$, $P$ executes for an arbitrary finite number of times. We assume an oracle $\repnum$, s.t.  for a given $P^*$ in the context process $S$, $\repnum(P^*, S)$ returns the upper bound of the number of times that $P$ is repeated in the context.
\item $\evolution{F}{\s}{B}$ is the continuous evolution statement. It forces the vector $\s$ of real variables to obey the differential equations $F$ as long as  $B$, which defines the domain of $\s$, holds, and terminates when $B$ turns false. Without loss of generality, we assume that the set of $B$ is open, thus the escaping point will be at the boundary of $B$.  The communication interrupt $\exempt{\evolution{F}{\s}{B}}{i\in I}{io_i}{Q_i}$ behaves like $\evolution{F}{\s}{B}$, except that the continuous evolution is preempted as soon as one of the communications $io_{i}$ takes place, which is followed by the respective $Q_{i}$. These two statements are the main extension of HCSP for describing continuous behavior.
\item $S_1\| S_2$ behaves as if $S_{1}$ and $S_{2}$ run independently except that all communications along the common channels connecting $S_{1}$ and $S_{2}$ are to be synchronized.  $S_{1}$ and $S_{2}$ in parallel can neither share variables, nor input or output channels.
\end{itemize}

For better understanding of the HCSP syntax, we  model the water tank system \cite{Ahmad14}, for which two components \emph{Watertank} and \emph{Controller}, are composed in parallel. The HCSP model of the system is given by \emph{WTS} as follows:
\[\small
\begin{array}{l}
\textit{WTS} \ \ \ \ \ \ \ \ \ \Define \textit{Watertank} \| \textit{Controller} \\
\textit{Watertank} \Define v:=v_0; d:=d_0;\\
\ \ \ \ \ \ \ \ \ \ \ \ \ \ \ \ \ \ \ \ \ \ \ (v=1 \to \langle \dot{d}=Q_{max} - \pi r^2  \sqrt{2  g d}\rangle \trianglerighteq (wl!d \to cv?v); \\
\ \ \ \ \ \ \ \ \ \ \ \ \ \ \ \ \ \ \ \ \ \ \ \ v=0 \to \langle \dot{d}=- \pi  r^2  \sqrt{2  g  d}\rangle \trianglerighteq (wl!d \to cv?v))^* \\
\textit{Controller} \Define y:=v_0;x:=d_0;(\pwait \ p; wl?x; x\ge ub \to y:=0;\\
\ \ \  \ \ \ \ \ \ \ \ \ \ \ \ \ \ \ \ \ \  \ \ \ x \le lb \to y:=1; cv!y)^*
\end{array}
\]
where $Q_{max}$, $\pi$, $r$ and $g$ are system parameters, $v$ is the control variable which takes $1$ or $0$, depending  on whether the valve is open or not, $d$ is the water level of the $\textit{Watertank}$ and its dynamics depends on the value of $v$. $v_0$ and $d_0$ are the initial values of controller variable and water level, respectively.  Two channels, $wl$ and $cv$, are used to transfer the water level ($d$ in $\textit{Watertank}$) and control variable ($y$ in $\textit{Controller}$) between $\textit{Watertank}$ and $\textit{Controller}$, respectively. The control value is computed by the \emph{Controller} with a period of $p$. When the water level is less than or equal to  $lb$,  the control value is assigned to $1$, and when the water level is greater than or equal to $ ub$, the control value is assigned to $0$, otherwise, it keeps unchanged. Basically, based on the current value of $v$, $\textit{Watertank}$ and $\textit{Controller}$ run  independently for $p$ time, then $\textit{Watertank}$ sends the current water level to $\textit{Controller}$, according to which a new value of the control variable is generated and sent back to $\textit{Watertank}$, after that, a new period repeats.

\subsection{Transition system of HCSP}
 Given an HCSP process $S$, we can derive a transition system $T(S)=\langle Q,L,\to,Q^{0},Y,H\rangle $ from $S$ by the following procedure:
\begin{itemize}
\item the set of states $Q=(\subp(S)\cup\{\epsilon\})\times V(S)$, where $\subp(S)$ is the set of sub-processes of $S$, e.g., $\subp(S)=\{S, \pwait \ d,B \rightarrow P\} \cup \subp(P)$ for $S::=\pwait \ d;B \rightarrow P$, $\epsilon$ is introduced to represent the terminal process, meaning that the process has terminated, and $V(S) = \{v| v\in \VAR(S) \rightarrow \VAL\}$ is the set of evaluations of the variables in $S$, with $\VAL$ representing the value space of variables. Without confusion in the context, we often call an evaluation $v$ a (process) state. Given a state $q\in Q$, we will use $\first(q)$ and $\second(q)$ to return the first and second component of $q$, respectively.
\item The label set $L$ corresponds to the actions of HCSP, defined as  $L= \mathbb{R}^+_0 \cup \Sigma(S)\centerdot\{?,!\}\centerdot \RR \cup \{\tau\}$, where $d\in \mathbb{R}^+_0$ stands for the time progress, $ch?c, ch!c \in \Sigma(S)\centerdot\{?,!\}\centerdot \RR$ means that an input along channel $ch$ with value $c$ being received, an output along $ch$ with value $c$ being sent, respectively. Besides, the silent action $\tau$ represents a discrete non-communication action of HCSP, such as assignment, evaluation of boolean expressions, and so on.
\item $Q^{0}=\{(S, v) | v \in V(S)\}$,  representing that $S$ has not started to execute, and $v $ is the initial process state of $S$.
\item $Y=\overline{\VAL}$, represents the set of value vectors corresponding to $\VAR(S)$.
\item Given $q \in Q$, $H(q)=\vectorr(\second(q))$,  where function $\vectorr$ returns the value vector corresponding to the process state of $q$.
\item $\to$ is the transition relation of $S$, which is given next.
 \end{itemize}

\subsubsection*{Sequential processes}

A transition relation of a sequential  HCSP process takes the form $(P,v)\xrightarrow{l}(P^{\prime},v^{\prime})$, indicating that starting from  state $v$, $P$ executes to $P'$ by performing action $l$, with the resulting state $v'$. Here we present the transition rules for continuous evolution as an illustration. Readers are referred to~\cite{ZWZ13} for the full details of the transition semantics, for both sequential and parallel HCSP processes.
\[
\begin{array}{c}
  \fracN{
\begin{array}{c}
   \forall d>0. \exists S(.):[0, d] \rightarrow \RR^n. (S(0)=v(\s)\land (\forall p\in[0,d).(F(\dot{S}(p),S(p))=0\\
   \land v[\s\mapsto S(p)](B)=true)))
\end{array}
 } {(\evolution{F}{\s}{B},v)\xrightarrow{d}(\evolution{F}{\s}{B},v[\s\mapsto S(d)])} \\[1.5em]
\fracN{v(B)=false} {(\evolution{F}{\s}{B},v)\xrightarrow{\tau}(\epsilon,v)}
\end{array}\]
%
%

For $\evolution{F}{\s}{B}$,
 for any $d\geq 0$,
it evolves for $d$ time units according to $F$
if $B$ evaluates to true within this period (the right end exclusive). In the rule, $S(\cdot): [0, d] \rightarrow \RR^n$ defines the trajectory  of the ODE $F$ with initial value $v(\s)$. Otherwise, by performing a $\tau$ action, the continuous evolution
terminates if $B$ evaluates to false.


\subsubsection*{Parallel composition}
Given two sequential processes $P_{1}$, $P_{2}$ and their transition systems $T(P_{1})=\langle Q_{1},L_{1},\to_{1},Q^{0}_{1},Y_{1},H_{1}\rangle $ and $T(P_{2})=\langle Q_{2},L_{2},\to_{2},Q^{0}_{2},Y_{2},H_{2}\rangle $, we can define the transition system of $P_{1}\|P_{2}$ as $T(P_{1}\|P_{2})=\langle Q,L,\to,Q,Y,H\rangle $, where:
\begin{itemize}
\item $Q=((\subp(P_1)\cup \{\epsilon\}) \| (\subp(P_2)\cup\{\epsilon\}))\times \{v_1 \uplus v_2 | v_1 \in V(P_1), v_2 \in V(P_2)\}$, where given two sets of processes $PS_1$ and $PS_2$, $PS_1 \| PS_2$ is defined as $\{\alpha\| \beta | \alpha \in PS_1 \wedge \beta \in PS_2\}$;  $v_1 \uplus v_2$ represents the disjoint union, i.e. $v_1 \uplus v_2 (x)$ is $v_1(x)$ if $x \in \VAR(P_1)$, otherwise $v_2(x)$.
\item $L=L_{1}\cup L_{2}$.
\item $Q^{0}=\{(P_{1}\|P_{2},  v^{0}_{1}\uplus v^{0}_{2}) | (P_i,  v^{0}_{i}) \in Q^{0}_{i} \mbox{ for $i=1, 2$}\}$.
\item $Y=Y_{1}\times Y_{2}$, the observation space of the parallel composition is obviously the Cartesian product of $Y_1$ and $Y_2$.
\item $H(q)= H_1(q) \times H_2(q)$,
 the observation function is the Cartesian product of the two component observation functions correspondingly.
\item $\to$ is defined based on the parallel composition of transitions of $L_1$ and $L_2$.
\end{itemize}

Suppose two transitions $(P_{1},u)\xrightarrow{\alpha}(P^{\prime}_{1},u^{\prime})$ and $(P_{2},v)\xrightarrow{\beta}(P^{\prime}_{2},v^{\prime})$ occur for $P_1$ and $P_2$, respectively. The rule for synchronization is given below:
\[   \fracN{\alpha=ch_{i}?c \land \beta=ch_{i}!e \land c=e}{(P_{1}\|P_{2},u\uplus v)\xrightarrow{\tau}(P^{\prime}_{1}\|P^{\prime}_{2},u^{\prime}\uplus v^{\prime})}\]

\subsection{Approximate bisimulation between HCSP processes}
  Let $P_1$ and $P_2$ be two HCSP processes, and  $h, \varepsilon$  the time and value precisions. Let $v_0$ be an arbitrary initial state. $P_1$ and $P_2$ are $(h, \varepsilon)$-\emph{approximately bisimilar}, denoted by $P_1\cong_{h, \varepsilon}P_2$, if $T(P_1)\cong_{h, \varepsilon}T(P_2)$, in which $T(P_1)$ and $T(P_2)$ are the $\tau$-compressed transition systems of $P_1$ and $P_2$ with the same initial state $v_0$, respectively.

{\small
  \begin{algorithm}[htb]
\caption{ Deciding approximately bisimilar between two HCSP processes}
\begin{algorithmic}[1]
\REQUIRE ~~~~
   Processes $P_1, P_2$, the initial state $v_0$, the time step $d$, and precisions $h$ and $\varepsilon$;
\ENSURE ~~\\
 $T(P_m).Q^0 = \{(P_m, v_0)\}, T(P_m).T^0 = \emptyset$ for $m=1, 2$; $i=0$;
 \REPEAT
\STATE $T(P_m).T^{i+1} =  T(P_m).T^{i} \cup \{\compress{q}{l}{q'}  | \forall q \in T(P_m).Q^{i},  \mbox{  if } (\exists
l \in \{d, \tau\} \cup \Sigma(P_m)\centerdot\{?,!\}\centerdot \RR. \compress{q}{l}{q'}) \mbox{ or }
 (\exists
l=d'. l < d \wedge \compress{q}{l}{q'} \wedge \mbox{ not } (\compress{q}{d''}{}) \mbox{ for any $d''$ in $(d', d]$}) \mbox{ and } \second(q')(t^m_j) < T^m_j\}$; \\
\STATE $T(P_m).Q^{i+1} = T(P_m).Q^{i} \cup \states(T(P_m).T^{i+1})$;\\
\STATE $i\leftarrow i+1$;
\UNTIL { $T(P_m).T^{i}=T(P_m).T^{i-1}$ }\\
\STATE $T(P_m).Q = T(P_m).Q^i; T(P_m).T = T(P_m).T^i$;\\
\STATE
$\mathcal{B}^0_{h, \varepsilon}=\{(q_1,q_2)\in T(P_1).Q\times T(P_2).Q | \textbf{d}(H_{1}(q_{1}),H_{2}(q_{2}))\le \varepsilon\}$; $i=0$;
\REPEAT
\STATE $\mathcal{B}^{i+1}_{h, \varepsilon}\leftarrow \{(q_1,q_2)\in B^{i}_{h, \varepsilon} | \compress{\forall q_{1}}{l}{_1 q^{\prime}_{1}} \in T(P_1).T$, $\exists q_{2}\xLongrightarrow{l'}_{2} q^{\prime}_{2} \in T(P_2).T$ s.t.  $(q^{\prime}_{1},q^{\prime}_{2})\in \mathcal{B}^i_{h, \varepsilon}$ and $\labeld(l, l') \le h$, and $\compress{\forall q_{2}}{l}{_2 q^{\prime}_{2}} \in T(P_2).T$, $\exists q_{1}\xLongrightarrow{l'}_{1} q^{\prime}_{1} \in T(P_1).T$ s.t.  $(q^{\prime}_{1},q^{\prime}_{2})\in \mathcal{B}^i_{h, \varepsilon}$ and $\labeld(l, l') \le h\}$;\\ \STATE $i\leftarrow i+1$;
\UNTIL { $\mathcal{B}^{i}_{h, \varepsilon}=\mathcal{B}^{i-1}_{h, \varepsilon}$ }
\STATE $\mathcal{B}_{h, \varepsilon}=\mathcal{B}^{i}_{h, \varepsilon}$;
\IF  {$ ((P_1, v_0),(P_2, v_0))\in \mathcal{B}_{h, \varepsilon} $}
\STATE return \textbf{true};
\ELSE
\STATE return \textbf{false};
\ENDIF
\end{algorithmic}\label{alg:compt-bisi}
\end{algorithm}
}

In Algorithm~\ref{alg:compt-bisi}, we consider the  $(h, \vare)$-approximate  bisimilation between  $P_1$ and $P_2$ for which all the ODEs occurring in $P_1$ and $P_2$ are \ggas. Suppose the set of ODEs occurring in $P_i$ is $\{F^i_1, \cdots, F^i_{ki}\}$, and the equilibrium points for them are $x^i_1, \cdots, x^i_{ki}$ for $i=1, 2$ respectively. As a result, for each ODE, there must exist a sufficiently large time, called \emph{equilibrium time}, s.t.  after the time, the distance between the trajectory and the equilibrium point is less than $\vare$. We denote the  equilibrium time for each $F^i_{j}$ for $j=1, \cdots, ki$ by $T^i_j$, respectively. Furthermore, in order to record the execution time of ODEs, for each ODE $F^i_{j}$, we introduce an auxiliary time variable $t^i_{j}$ and add $t^i_{j}:=0; \dot{t^i_{j}}=1$ to $F^i_{j}$ correspondingly.

Algorithm~\ref{alg:compt-bisi} decides whether $P_1$ and $P_2$ are $(h, \vare)$-approximately bisimilar. When $P_{1}\cong_{h, \varepsilon}P_{2}$, it returns \textbf{true}, otherwise, it returns \textbf{false}. Let $d$ be the discretized time step.
The algorithm is then taken in two steps. The first step (lines 1-6) constructs the transition systems for $P_1$ and $P_2$ with time step $d$. For $m=1, 2$, $T(P_m).Q$ and $T(P_m).T$ represent the reachable set of states and transitions of $P_m$, respectively, which are initialized as empty sets and then constructed iteratively. At each step $i$, a new transition can be a $d$ time progress,  a $\tau$ event, or a communication event. Besides, a transition can be a time progress less than $d$, which might be caused by the occurrence of a boundary interrupt or a communication interrupt during a continuous evolution. The new transition will be added only when the running time for each ODE $F^m_j$, denoted by $t^m_j$, is less than the corresponding equilibrium time. Therefore, for either process $P_m$, whenever some ODE runs beyond its equilibrium time, the set of reachable transitions reaches a fixpoint by allowing precision $\vare$ and will not be extended any more. The set of reachable states can be obtained by collecting the post states of reachable transitions. Based on Def.~\ref{definition:appbisimulation}, the second step (lines 7-17) decides whether the transition systems  for $P_1$ and $P_2$ are  approximately bisimilar with the given precisions.


The first part  (lines 1-6) of the algorithm computes the transitions of processes. For each process $P_m$, its complexity  is $O(|T(P_m).T|)$, which is $O(\lceil \frac{T_m}{d}\rceil + N_m)$, where $T_m$ represents the execution time of $P_m$ till termination or reaching the equilibrium time of some  ODE, and $N_m$ the number of atomic statements of $P_m$. The second part (lines 7-17) checks for $P_1$ and $P_2$ each pair of the states whose distance  is within $\varepsilon$ by traversing the outgoing transitions, to see if they are truly approximate bisimilar, till the fixpoint $\mathcal{B}_{h, \varepsilon}$ is reached. We can compute the time complexity to be $O(Q_1^2Q_2^2T_1T_2)$, where $Q_m$ and $T_m$ represent $O(|T(P_m).Q|)$ and $O(|T(P_m).T|)$ for $m=1, 2$ respectively.

\begin{theorem}[Correctness] \label{theorem:algorithmcorrevtness}
Algorithm~\ref{alg:compt-bisi} terminates, and for any $v_0$, $P_1 \cong_{h, \varepsilon} P_2$ iff
   $((P_1, v_0), (P_2, v_0))\in \mathcal{B}_{h, \varepsilon}$.
\end{theorem}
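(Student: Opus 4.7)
The plan is to prove the theorem in two parts: termination of Algorithm~\ref{alg:compt-bisi}, and the iff characterization of the returned relation.

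For termination, I would treat the two loops separately. The first loop (lines 1--6) monotonically grows $T(P_m).T^i$. Since the oracle $\repnum$ bounds the number of unfoldings of every $P^\ast$, and every continuous evolution $\langle F(\dot{\s},\s)=0 \& B\rangle$ is cut off as soon as its auxiliary clock $t^m_j$ meets the equilibrium threshold $T^m_j$, the set of reachable configurations $\subp(P_m) \times V(P_m)$ that the loop ever inserts is finite. Thus $T(P_m).T^i$ stabilises in finitely many steps. The second loop (lines 7--17) manipulates subsets of the finite set $T(P_1).Q \times T(P_2).Q$ via a strictly decreasing chain $\mathcal{B}^0_{h,\varepsilon} \supseteq \mathcal{B}^1_{h,\varepsilon} \supseteq \cdots$, which must reach a fixed point.

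For the iff, my approach is to recognise the second loop as a Knaster--Tarski greatest-fixpoint computation. Let $\Phi$ be the one-step bisimulation functional implied by Def.~\ref{definition:appbisimulation}, mapping a relation $\mathcal{R}$ to the set of pairs $(q_1,q_2)$ with $\mathbf{d}(H_1(q_1),H_2(q_2))\le\varepsilon$ and matching $\tau$-compressed transitions (via $\Longrightarrow$) within $h$-distance in $\mathcal{R}$. The loop implements $\mathcal{B}^{i+1}_{h,\varepsilon} = \mathcal{B}^i_{h,\varepsilon} \cap \Phi(\mathcal{B}^i_{h,\varepsilon})$ starting from the condition-(1) cut $\mathcal{B}^0_{h,\varepsilon}$; since $\Phi$ is monotone and $\mathcal{B}^0_{h,\varepsilon}$ contains every candidate bisimulation, the fixpoint $\mathcal{B}_{h,\varepsilon}$ is exactly the greatest post-fixpoint of $\Phi$, which by Lemma~\ref{lemma:union} coincides with $\mathcal{B}^{max}_{h,\varepsilon}$. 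Consequently $((P_1,v_0),(P_2,v_0)) \in \mathcal{B}_{h,\varepsilon}$ iff there exists some $(h,\varepsilon)$-approximate bisimulation relating the two initial states, which by Def.~\ref{definition:TSappbisimulation} is exactly $P_1 \cong_{h,\varepsilon} P_2$.

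The main obstacle I foresee is bridging the truncated transition systems produced by the algorithm and the genuine transition systems $T(P_1), T(P_2)$ used in Def.~\ref{definition:TSappbisimulation}. After the clock $t^m_j$ crosses the equilibrium time $T^m_j$, the algorithm stops adding outgoing transitions from the corresponding ODE states, whereas the true $T(P_m)$ would continue evolving. The \ggas{} hypothesis is the key: for every initial condition, the true trajectory after $T^m_j$ stays within $\varepsilon$ of the equilibrium point $x^m_j$, so any two such residual trajectories are pairwise within $2\varepsilon$ of one another. I would therefore show that on pairs of states whose clocks have both expired, the observation condition (1) of Def.~\ref{definition:appbisimulation} is automatically preserved by all further delay/communication extensions (up to precision $\varepsilon$), which allows the truncated greatest fixpoint to be extended to a genuine bisimulation on the full transition systems and, conversely, any genuine bisimulation to be restricted to the truncation without loss. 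Combining this extension/restriction argument with the fixpoint characterisation above completes both directions of the iff.
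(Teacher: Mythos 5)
Your two main parts coincide with the paper's own argument. For termination, the paper likewise argues that the first loop stabilises because every ODE is cut off at its equilibrium time (it phrases this as ``$P_m$ terminates in a bounded time'' under the \ggas\ assumption), and that the second loop is a decreasing chain of subsets of the finite set $T(P_1).Q\times T(P_2).Q$. For the iff, the paper proves $\mathcal{B}_{h,\varepsilon}=\bigcap_{i=0}^{N}\mathcal{B}^{i}_{h,\varepsilon}=\mathcal{B}^{max}_{h,\varepsilon}$ by showing the two inclusions directly, using $\mathcal{B}^{N-1}_{h,\varepsilon}=\mathcal{B}^{N}_{h,\varepsilon}$ to verify that the fixpoint is itself an approximate bisimulation; your Knaster--Tarski packaging of the same computation is a cleaner way of saying exactly this, and the appeal to Lemma~\ref{lemma:union} for the existence of $\mathcal{B}^{max}_{h,\varepsilon}$ matches the paper.

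Where you go beyond the paper is the third paragraph: the paper never explicitly bridges the truncated transition systems built in lines 1--6 with the full systems $T(P_1),T(P_2)$ over which $\cong_{h,\varepsilon}$ is defined, so identifying that obligation is to your credit. However, your proposed resolution has a quantitative gap. If the two residual trajectories are each within $\varepsilon$ of their \emph{respective} equilibria $x^1_j$ and $x^2_k$, the triangle inequality only yields $\|X_1(t)-X_2(t')\|\le 2\varepsilon+\|x^1_j-x^2_k\|$, which is not $\le\varepsilon$ in general; condition~1 of Def.~\ref{definition:appbisimulation} is therefore not ``automatically preserved'' after both clocks expire. To close this you would need either to tighten the equilibrium-time threshold (e.g., require distance $<\varepsilon/2$ to the equilibrium) together with an assumption that the matched equilibria are themselves sufficiently close, or to argue that only pairs of cutoff states whose equilibria satisfy $\mathbf{d}(x^1_j,x^2_k)$ small enough survive the $\mathcal{B}^0_{h,\varepsilon}$ cut and its iterations. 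As stated, the extension step would fail, and since the paper is silent on this point you cannot simply defer to it.
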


\section{Discretization of HCSP}
\label{section:discretizationofhcsp}

In this section, we consider the discretization of HCSP processes, by which the continuous dynamics is represented by discrete approximation. Let $P$ be an HCSP process and  $(h,\varepsilon$) be the precisions, our goal is to construct a discrete process $D$ from $P$, s.t.  $P$ is $(h,\varepsilon)$-bisimilar with $D$, i.e., $\bisimilar{P}{D}$ holds.

\subsection{Discretization of Continuous Dynamics}

Since most differential equations do not have explicit solutions, the discretization of the dynamics is normally given by discrete approximation. Consider the ODE $\dot{\xx} = \ff(\xx)$ with the initial value $\widetilde{\xx}_0 \in \RR^n$, and assume $X(t, \widetilde{\xx}_0)$ is the trajectory of the initial value problem along the time interval $[t_0, \infty)$. In the following discretization, assume $h$ and $\euler$ represent the time step size and the precision of the discretization, respectively. Our strategy is as follows:
 \begin{itemize}
   \item First, from the fact that $\dot{\xx} = \ff(\xx)$ is GAS, there must exist a sufficiently large $T$ s.t.   $\|X(t, \widetilde{\xx}_0)-\bar{\xx}\| < \euler$ holds when $t > T$, where $\bar{\xx}$ is an equilibrium point. As a result, after time $T$, the value of $\xx$ can be approximated by the equilibrium point $\bar{\xx}$ and the distance between the actual value of $\xx$ and $\bar{\xx}$ is always within $\euler$.

   \item Then, for the bounded time interval $[t_0, T]$, we apply Euler method to discretize the continuous dynamics.
 \end{itemize}

There are a range of different discretization  methods for  ODEs~\cite{Stoer13} and the Euler method is an effective one among them.  According to the Euler method, the ODE $\dot{\xx} = \ff(\xx) $ is discretized as
\[(\xx:= \xx+h\ff(\xx); \pwait\ h)^N\]
A sequence of approximate solutions $\{\xx_i\}$ at time stamps $\{h_i\}$ for $i=1, 2, \cdots, N$ with $ N= \lceil \frac{T-t_0}{h}\rceil$ are obtained, satisfying (define $\xx_0 = \widetilde{\xx}_0$):
\[h_i = t_0 + i*h \quad \xx_i = \xx_{i-1} + h \ff(\xx_{i-1}).\]
 $\|X(h_i, \widetilde{\xx}_0) - \xx_i\|$ represents the discretization error at time $h_i$. To estimate the global error of the approximation,  by Theorem \textbf{3} in~\cite{Platzer12}, we can prove the following theorem:
\begin{theorem}[Global error with an initial error]
   Let $X(t, \widetilde{\xx}_0)$ be a solution on $[t_0, T]$ of the initial value problem $\dot{\xx} = \ff(\xx), \xx(t_0)=\widetilde{\xx}_0$, and $L$  the Lipschitz constant s.t.  for any compact set $S$ of $\RR^n$, $\|\ff(\yy_1) - \ff(\yy_2)\| \leq L\|\yy_1 - \yy_2\|$ for all $\yy_1, \yy_2 \in S$. Let $\xx_0 \in \RR^n$ satisfy $\|\xx_0 - \widetilde{\xx}_0\| \le \euler_1$. Then there exists an $h_0 > 0$, s.t.  for all $h$ satisfying $0<h\leq h_0$, and for all $n$ satisfying $nh \leq (T-t_0)$, the sequence $\xx_n = \xx_{n-1} + h\ff(\xx_{n-1})$ satisfies:
   \[\|X(nh, \widetilde{\xx}_0) - \xx_n\| \leq e^{(T-t_0)L}\euler_1 + \frac{h}{2}\max_{\zeta \in [t_0, T]}{\|X''(\zeta, \widetilde{\xx}_0)\|}\frac{e^{L(T-t_0)} -1}{L}\]
   \vspace*{-5mm}
\label{theorem:globalerror}
\end{theorem}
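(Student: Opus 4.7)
The plan is to follow the classical Euler global error argument but keep track of a nonzero initial error $\euler_1$ throughout the induction, so that the first term $e^{(T-t_0)L}\euler_1$ appears in the bound. In other words, I would derive a one-step recurrence for $e_n \Define \|X(t_0+nh,\widetilde{\xx}_0)-\xx_n\|$ that couples the Lipschitz propagation of error with the local truncation error, and then solve this recurrence via the discrete Gronwall inequality. The statement cited as Theorem~3 in \cite{Platzer12} handles the $\euler_1=0$ case, and the present generalization amounts to replacing the base of the induction by $e_0\le\euler_1$ and tracking the additional $(1+hL)^n\euler_1$ contribution.

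Concretely, first I would fix a compact set $S\subseteq\RR^n$ containing a tube around the true trajectory $\{X(t,\widetilde{\xx}_0):t\in[t_0,T]\}$, and pick $h_0>0$ small enough that, together with $\euler_1$ sufficiently small, every Euler iterate $\xx_0,\xx_1,\ldots,\xx_N$ stays inside $S$; this is where the local Lipschitz constant $L$ is uniformly valid and where the continuity of $X''$ (guaranteed by $\ff\in C^1$ on $S$) ensures $M\Define\max_{\zeta\in[t_0,T]}\|X''(\zeta,\widetilde{\xx}_0)\|<\infty$. Then, by Taylor's theorem applied componentwise to $X(\cdot,\widetilde{\xx}_0)$, for each $n$ there exists $\zeta_n\in[t_0+nh,t_0+(n+1)h]$ with
\[
X(t_0+(n+1)h,\widetilde{\xx}_0)=X(t_0+nh,\widetilde{\xx}_0)+h\,\ff\!\bigl(X(t_0+nh,\widetilde{\xx}_0)\bigr)+\tfrac{h^2}{2}X''(\zeta_n,\widetilde{\xx}_0),
\]
using $\dot X=\ff(X)$. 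Subtracting the Euler update $\xx_{n+1}=\xx_n+h\ff(\xx_n)$, taking norms, and applying the Lipschitz bound on $\ff$ inside $S$ yields the key recurrence
\[
e_{n+1}\le (1+hL)\,e_n+\tfrac{h^2}{2}M.
\]

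Next, I would solve this linear recurrence by induction: if $e_n\le (1+hL)^n e_0+\tfrac{hM}{2L}\bigl((1+hL)^n-1\bigr)$, then the recurrence gives the same form for $e_{n+1}$, with base case $e_0=\|\xx_0-\widetilde{\xx}_0\|\le\euler_1$. Using the standard estimate $(1+hL)^n\le e^{nhL}\le e^{(T-t_0)L}$ whenever $nh\le T-t_0$ converts this into
\[
e_n\le e^{(T-t_0)L}\euler_1+\frac{h}{2}M\cdot\frac{e^{L(T-t_0)}-1}{L},
\]
which is exactly the stated bound.

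The main obstacle is not the algebra but the uniformity step that precedes it: choosing $h_0$ so that the Euler iterates provably remain in the compact tube around $X(\cdot,\widetilde{\xx}_0)$ where $L$ and $M$ are valid. One clean way is to make the argument self-referential: pick a tube of radius $r$ around the trajectory, and show that for $\euler_1$ and $h$ small enough, the right-hand side of the bound above is itself $<r$, so the iterates never leave the tube and the Lipschitz hypothesis bootstraps. With that taken care of, the rest is a straightforward Taylor-plus-discrete-Gronwall calculation, and the shape of the bound matches the cited Platzer result with the extra $e^{(T-t_0)L}\euler_1$ summand accounting for the initial discrepancy.
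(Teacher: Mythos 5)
Your proof is correct, but it takes a different route from the paper. You re-derive the entire global error bound from scratch: Taylor expansion of the true solution giving the local truncation term $\tfrac{h^2}{2}M$, the one-step recurrence $e_{n+1}\le(1+hL)e_n+\tfrac{h^2}{2}M$, a discrete Gronwall induction with base case $e_0\le\euler_1$, and finally $(1+hL)^n\le e^{nhL}\le e^{(T-t_0)L}$. The paper instead keeps the cited Theorem~3 of \cite{Platzer12} as a black box: it introduces an auxiliary Euler sequence $\widehat{\xx}_n$ started exactly at $\widetilde{\xx}_0$, invokes the cited result for $\|X(nh,\widetilde{\xx}_0)-\widehat{\xx}_n\|$, shows by a simple Lipschitz induction that the two discrete sequences diverge by at most $(1+Lh)^n\euler_1$, and concludes by the triangle inequality and $(1+Lh)^n\le e^{(T-t_0)L}$. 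So the decompositions differ: yours is a single induction comparing the continuous solution to the perturbed Euler sequence (and hence must justify the truncation-error step itself, where your ``componentwise Taylor with a single $\zeta_n$'' should really be stated with an integral remainder or a per-component mean value, though the resulting norm bound $\tfrac{h^2}{2}\max_\zeta\|X''(\zeta,\widetilde{\xx}_0)\|$ is unaffected); the paper's is shorter because the truncation analysis is outsourced to the citation, and only the purely discrete $(1+Lh)^n\euler_1$ propagation is proved. One small caveat on your uniformity step: the theorem fixes $\euler_1$, so you may not additionally assume ``$\euler_1$ sufficiently small''; however, since the hypothesis makes $L$ a Lipschitz constant valid on every compact set and $M$ is taken along the true trajectory on the compact interval $[t_0,T]$, the tube-containment bootstrapping is not actually needed, and your argument goes through with $h_0$ alone.
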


By Theorem~\ref{theorem:globalerror} and the property of GAS, we can prove the following main theorem. 
\begin{theorem}[Approximation of an ODE] Let $X(t, \widetilde{\xx}_0)$ be a solution on $[t_0, \infty]$ of the initial value problem $\dot{\xx} = \ff(\xx), \xx(t_0)=\widetilde{\xx}_0$, and $L$  the Lipschitz constant.   Assume $\dot{\xx} = \ff(\xx)$ is GAS with the equilibrium point $\bar{\xx}$.   Then for any precision $\euler>0$, there exist $h>0, T>0$ and $\euler_1 >0$ s.t.  $\dot{\xx} = \ff(\xx), \xx(t_0)=\widetilde{\xx}_0$ and  $\xx:=\xx_0; (\xx:= \xx+h\ff(\xx); \pwait\ h)^N;  \xx:=\bar{\xx}; \nstop$ with $N=\lceil \frac{T-t_0}{h}\rceil$ are $(h,\euler)$-approximately bisimilar, in which $\|\xx_0 - \widetilde{\xx}_0\| < \euler_1$ holds, i.e., there is an error between the initial values.
\label{theorem:approximate}
\end{theorem}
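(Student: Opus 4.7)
The plan is to build an explicit $(h, \euler)$-approximate bisimulation $\mathcal{B}_{h, \euler}$ between the transition system of the continuous ODE $\evo{\xx}{\ff(\xx)}{\text{true}}$ starting at $\widetilde{\xx}_0$ and that of the discretised process $D \Define \xx := \xx_0;\,(\xx := \xx + h\ff(\xx);\,\pwait\ h)^N;\,\xx := \bar{\xx};\,\nstop$, and then appeal to Def.~\ref{definition:TSappbisimulation}.

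First I would pin the three constants. By GAS there is $T > t_0$ with $\|X(t, \widetilde{\xx}_0) - \bar{\xx}\| < \euler/3$ for every $t \ge T$. On the compact interval $[t_0, T]$ the trajectory lives in a compact set, so $M_\ff := \sup_{t \in [t_0, T]} \|\ff(X(t, \widetilde{\xx}_0))\|$ and $M_{X''} := \sup_{t \in [t_0, T]} \|X''(t, \widetilde{\xx}_0)\|$ are finite. Applying Theorem~\ref{theorem:globalerror} I choose $h, \euler_1 > 0$ small enough that
\[ e^{(T - t_0)L}\,\euler_1 + \tfrac{h}{2}\,M_{X''}\,\tfrac{e^{L(T - t_0)} - 1}{L} \;\le\; \euler/3, \]
and shrink $h$ further so that $h\cdot M_\ff \le \euler/3$ and so $h$ is below the time-step cap implicit in Algorithm~\ref{alg:compt-bisi} (so every continuous delay we need to match has length $\le h$). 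Set $N := \lceil (T - t_0)/h \rceil$.

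Next I would define $\mathcal{B}_{h, \euler}$ by pairing a continuous state $(\evo{\xx}{\ff(\xx)}{\text{true}}, v_c)$ with $v_c(\xx) = X(t_0 + s, \widetilde{\xx}_0)$ to the ``co-progressing'' discrete state: for $s \in [kh, (k{+}1)h]$ with $0 \le k < N$, to the state inside the $k$-th $\pwait\ h$ carrying value $\xx_k$ with elapsed wait $s - kh$; for $s \ge Nh$, to the terminal $\nstop$ state carrying value $\bar{\xx}$. Pre-loop pairs (before $\xx := \xx_0$ fires) are admitted under $\|\xx_0 - \widetilde{\xx}_0\| < \euler_1 < \euler$. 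I then check the three clauses of Def.~\ref{definition:appbisimulation}. The observation bound on a loop pair follows from the triangle inequality
\[ \|X(t_0 + s) - \xx_k\| \;\le\; \|X(t_0 + s) - X(t_0 + kh)\| + \|X(t_0 + kh) - \xx_k\| \;\le\; h\,M_\ff + \euler/3 \;\le\; 2\euler/3, \]
and on a post-loop pair it is direct GAS. A continuous delay $d \le h$ is mirrored by a discrete $\xLongrightarrow{d}$ of matching length using delay additivity within a single $\pwait\ h$ and the $\tau$-closure in $\xLongrightarrow{}$ to absorb any intervening Euler-update assignment; discrete $\tau$'s are mirrored by continuous $0$-delays and discrete wait-delays by equal continuous delays. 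In every case $\labeld(l, l') \le h$ and the successor pair remains in $\mathcal{B}_{h, \euler}$ by construction.

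The hardest step I expect is the ``phase transition'' at $s = Nh$: the discrete $\tau$ implementing $\xx := \bar{\xx}$ produces a discontinuous jump in the discrete observation from $\xx_N$ to $\bar{\xx}$, and I must show this jump stays inside the $\euler$-ball around the continuous trajectory. This will be controlled by combining the Euler bound $\|X(t_0 + Nh) - \xx_N\| \le \euler/3$ with the GAS bound $\|X(t_0 + Nh) - \bar{\xx}\| < \euler/3$, and by pairing the $\tau$ with a continuous $0$-delay, which $\xLongrightarrow{}$ accommodates through its pre-action $\tau$-slot. A secondary subtlety, already folded into the choice of $h$, is ensuring that a single continuous delay is always short enough to be matched by a single $\xLongrightarrow{}$ step, so that no matching has to span more than one $\pwait\ h$ block of $D$.
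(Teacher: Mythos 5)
Your error analysis on the bounded horizon and on the tail follows the paper's own route: pick $T$ from GAS, bound the Euler error on $[t_0,T]$ via Theorem~\ref{theorem:globalerror}, add a per-step variation bound ($h\,M_\ff$, which is exactly the paper's Lagrange--mean-value bound $D_i h$), and control the switch to $\bar{\xx}$ at step $N$ by combining the Euler and GAS bounds; all of that is sound and matches the paper. The genuine gap is in the delay-matching step. You ``shrink $h$ \ldots\ so that every continuous delay we need to match has length $\le h$'', appealing to a time-step cap ``implicit in Algorithm~\ref{alg:compt-bisi}''. But the theorem asserts $(h,\euler)$-approximate bisimilarity of the two \emph{processes}, i.e., of their $\tau$-compressed transition systems as defined in Sec.~\ref{section:tsofhcsp}; the step $d$ of Algorithm~\ref{alg:compt-bisi} is only a device for \emph{deciding} bisimilarity and plays no role in the semantics. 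The transition system of $\langle \dot{\xx}=\ff(\xx)\,\&\,\mathrm{true}\rangle$ contains a delay transition of every duration $t\in\mathbb{R}^+$, and clause~2 of Def.~\ref{definition:appbisimulation} requires each such transition --- including $t\gg h$ --- to be answered by a single $\xLongrightarrow{l'}$ of the discrete process with $\labeld(t,l')\le h$. Your relation and case analysis never produce such a matching: one $\pwait\ h$ block contributes a delay of at most $h$, so the case you explicitly set aside (``no matching has to span more than one $\pwait\ h$ block'') is precisely the case that has to be proved.

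This is where the paper's proof does additional work: for an arbitrary continuous delay $t$ taken from an intermediate point $X(t_i,\widetilde{\xx}_0)$, it matches the composite discrete delay $\lceil (t_i+t-(i+1)h)/h\rceil\, h$, i.e., it lets the discrete side run through several Euler updates and waits (absorbing the interleaved $\tau$'s), and then verifies both that the label distance is at most $h$ and that the resulting pair of states is still within $\euler$ --- the latter following from the same uniform bound you derived. To repair your proposal you either need this multi-block matching, or a separate lemma showing that matching all delays of length $\le h$ together with all $\tau$'s suffices; such a lemma is not free, since the approximate-bisimulation conditions do not automatically compose across delays separated by $\tau$-steps. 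Apart from this point, your construction of the relation, the $\euler/3$ bookkeeping, and the treatment of the jump to $\bar{\xx}$ coincide with the paper's argument.
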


\subsection{Discretization of HCSP}

We continue to consider the discretization of HCSP processes, among which any arbitrary number of ODEs,  the discrete dynamics, and communications are involved. Below, given an HCSP process $P$, we use $\discrete{h}{\varepsilon}{P}$ to represent the discretized process of $P$,  with parameters $h$ and $\varepsilon$ to denote the step size and the precision (i.e.  the maximal ``distance'' between states in $P$ and $\discrete{h}{\varepsilon}{P}$), respectively.

Before giving the discretization of HCSP processes, we need to introduce the notion of readiness variables.
In order to express the readiness information of communication events, for each channel $ch$, we introduce two boolean variables $ch?$ and $ch!$, to represent whether the input and output events along $ch$ are ready to occur. We will see that in the discretization, the readiness information of partner events is necessary to specify the behavior of communication interrupt. 

Table~\ref{table:discretizationofHCSP} lists the definition of $\discrete{h}{\varepsilon}{P}$.
For each rule, the original process is listed above the line, while the discretized process is defined below the line.
For $\pskip, x:=e$ and $\pwait \ d$, they are kept unchanged in the discretization.  For input $ch?x$, it is discretized as itself, and furthermore, before $ch?x$ occurs,  $ch?$ is assigned to 1 to represent that $ch?x$ becomes ready, and in contrary,  after $ch?x$ occurs, $ch?$ is reset to 0. The output $ch!e$ is handled similarly. The compound constructs, $P; Q$, $P \sqcap Q$, $P^*$ and $P \| Q$ are discretized inductively according to their structure. For $B \rightarrow P$, $B$ is still approximated to $B$ and $P$ is discretized inductively. For external choice $\talloblong_{i\in I} io_i\rightarrow P_i$, the readiness variables $io_i$ for all $i\in I$ are set to 1 at first, and after the choice is taken, all of them are reset to 0 and the corresponding process is discretized. Notice that because $I$ is finite, the $\forall$ operator is defined as an abbreviation of the conjunction over $I$.

Given a boolean expression $B$ and a precision $\varepsilon$, we define $N(B, \varepsilon)$ to be a boolean expression which holds in the $\varepsilon$-neighbourhood of $B$. For instance, if $B$ is $x>2$, then  $N(B, \varepsilon)$ is $x>2-\varepsilon$.
For a continuous evolution  $\evolutionn{\dot{\xx}=\ff(\xx)}{B}$,  under the premise that $\dot{\xx}=\ff(\xx)$ is GAS, there must exists time $T$ such that when the time is larger than $T$, the distance between the actual state of $\xx$ and the equilibrium point, denoted by $\bar{\xx}$, is less than $\varepsilon$. Then according to  Theorem ~\ref{theorem:globalerror},
$\evolutionn{\dot{\xx}=\ff(\xx)}{B}$ is discretized as follows: First, it is a repetition of the  assignment to $\xx$ according to the Euler method for at most $\lceil \frac{T}{h}\rceil$ number of times, and then followed by the assignment of $\xx$  to the equilibrium point and stop forever. Both of them are guarded by the condition $N(B, \varepsilon)$.
For a communication interrupt $\exempt{\evolutionn{\dot{\xx}=\ff(\xx)}{B}}{i\in I}{io_i}{Q_i}$, suppose $T$ is sufficiently large s.t. when the time is larger than $T$, the distance between the actual state of $\xx$ and the equilibrium point, denoted by $\bar{\xx}$, is less than $\varepsilon$, and furthermore, if the interruption occurs, it must occur before $T$, and let $\overline{ch\ast}$ be the dual of $ch\ast$, e.g., if $ch\ast =ch?$, then $\overline{ch\ast}=ch!$ and vice versa. After all the readiness variables corresponding to $\{io_i\}_I$ are set to 1 at the beginning, the discretization is taken by the following steps:  first, if $N(B, \varepsilon)$ holds and no communication among $\{io_i\}_{i \in I}$ is ready,  it executes following the discretization of continuous evolution, for at most $\lceil \frac{T}{h}\rceil$ number of steps; then if $N(B, \varepsilon)$ turns false without any communication occurring, the whole process terminates and meanwhile the readiness variables are reset to 0; otherwise if some communications get ready, an external choice between these ready communications is taken, and then, the readiness variables are reset to 0 and the corresponding $Q_i$ is followed; finally, if the communications never occur and the continuous evolution never terminates, the continuous variable is assigned to the equilibrium point and the time progresses forever. It should be noticed that, the readiness variables of the partner processes will be used to decide whether a communication is able to occur. They are shared between parallel processes, but will always be written by one side.
\begin{table}[t]
\small
\centering
\begin{tabular}{c}
\hline \\
$\fracN{\pskip}{\pskip}\quad$
$\fracN{x:=e}{x:=e}\quad$
$ \fracN{\pwait \ d}{ \pwait \ d }$ \\[1.0em]
$ \fracN{ch?x}{ch?:=1; ch?x;  ch?:=0}\quad$
$ \fracN{ch!e}{ch!:=1; ch!e; ch!:=0}$ \\[1.0em]
$\fracN{P; Q}{\discrete{h}{\varepsilon}{P}; \discrete{h}{\varepsilon}{Q}} \quad \fracN{B \rightarrow P}{B \rightarrow \discrete{h}{\varepsilon}{P}}\quad$
$ \fracN{P \sqcap Q}{\discrete{h}{\varepsilon}{P} \sqcap \discrete{h}{\varepsilon}{Q}}$ \\[1.0em]
$\fracN{\talloblong_{i\in I} io_i\rightarrow P_i}{\forall i\in I. io_i :=1; \talloblong_{i\in I} io_i\rightarrow (\forall i\in I. io_i :=0; \discrete{h}{\varepsilon}{P_i})}$\\[1.0em]
$ \fracN{\evolutionn{\dot{\xx}=\ff(\xx)}{B}}{(N(B,\varepsilon) \rightarrow (\xx:=\xx+h \ff(\xx); \pwait \ h))^{\lceil\frac{T}{h}\rceil}; N(B, \varepsilon) \rightarrow (\xx:=\bar{\xx}; \nstop)}$ \\[1.8em]
$ \fracN{\exempt{\evolutionn{\dot{\xx}=\ff(\xx)}{B}}{i\in I}{io_i}{Q_i}}
  {
  \begin{array}{c}
  \forall i\in I. io_i :=1; (N(B, \varepsilon) \rightarrow
   \forall i\in I. io_i \wedge \neg \overline{io_i} \rightarrow
    (\xx:=\xx+h \ff(\xx); \pwait\ h))^{\lceil\frac{T}{h}\rceil};\\
    \neg N(B, \varepsilon) \wedge \forall i\in I. io_i \wedge \neg \overline{io_i} \rightarrow  \forall i\in I. io_i :=0;\\
     \exists i. io_i \wedge \overline{io_i} \rightarrow (\talloblong_{i\in I} io_i \rightarrow (\forall i\in I. io_i :=0; \discrete{h}{\varepsilon}{Q_i})); \\
    (N(B, \varepsilon) \wedge \forall i\in I. io_i \wedge \neg \overline{io_i}) \rightarrow (\xx:=\bar{\xx}; \nstop);
   \end{array}
   }$ \\[3.0em]
$ \fracN{P^{\ast}}{(\discrete{h}{\varepsilon}{P})^{\ast}}\quad$
$\fracN{P\|Q}{\discrete{h}{\varepsilon}{P}\|\discrete{h}{\varepsilon}{Q}}$ \\[1.0em]
\hline\\
\end{tabular}
\caption{The rules for discretization of HCSP}
\label{table:discretizationofHCSP}

\end{table}

Consider the water tank system introduced in Sec. \ref{section:tsofhcsp}, by using the rules in Table~\ref{table:discretizationofHCSP}, a discretized system $\textit{WTS}_{h,\varepsilon}$ is obtained as follows:
\[\small
\begin{array}{l}
\textit{WTS}_{h,\varepsilon} \ \ \ \ \ \ \ \ \ \Define \textit{Watertank}_{h,\varepsilon} \| \textit{Controller}_{h,\varepsilon} \\
\textit{Watertank}_{h,\varepsilon} \Define v:=v_0; d:=d_0;( v=1 \to (wl!:=1; \\
\ \ \  \ \ \ \ \ \ \ \ \ \ \ \ \ \ \ \ \ \ \ \ \ \ \ \ (wl! \land \lnot wl? \to (d=d+h(Q_{max}-\pi r^2\sqrt{2gd});wait \ h;))^{\lceil \frac{T_1}{h}\rceil};\\
\ \ \  \ \ \ \ \ \ \ \ \ \ \ \ \ \ \ \ \ \ \ \ \ \ \ \ \ wl! \land wl? \to (wl!d;wl!:=0;cv?:=1;cv?v;cv?:=0);\\
\ \ \  \ \ \ \ \ \ \ \ \ \ \ \ \ \ \ \ \ \ \ \ \ \ \ \ \ wl! \land \lnot wl? \to(d=Q^2_{max}/2g\pi^2r^4;\nstop));\\
 \ \ \ \ \ \ \ \ \ \ \ \ \ \ \ \ \ \ \ \ \ \ \ \  \ \ \ v=0 \to (wl!:=1; \\
\ \ \  \ \ \ \ \ \ \ \ \ \ \ \ \ \ \ \ \ \ \ \ \  \ \ \ (wl! \land \lnot wl? \to (d=d+h(-\pi r^2\sqrt{2gd});wait \ h;))^{\lceil \frac{T_2}{h}\rceil};\\
\end{array}
\]
\[\small
\begin{array}{l}
\
 \ \ \ \ \ \ \ \ \  \ \ \ \ wl! \land wl? \to (wl!d;wl!:=0;cv?:=1;cv?v;cv?:=0);\\
\ \ \ \ \ \ \ \ \ \  \ \ \ \ wl! \land \lnot wl? \to(d=0;\nstop))
)^*
\end{array}
\]
\[
\begin{array}{l}
\textit{Controller}_{h,\varepsilon} \Define y:=v_0;x:=d_0;(wait \ p; wl?:=1;wl?x;wl?:=0; \\
\ \ \ \ \ \ \ \ \ \ \ \ \ \ \ \ \ \ \ \ \ \ \ \ \ \ \ \ x\ge ub \to y:=0;x \le lb \to y:=1;cv!:=1;cv!y;cv!:=0)^*
\end{array}
\]


\subsection{Properties} Before giving the main theorem, we introduce some notations.
In order to keep the consistency between the behavior of an HCSP process and its discretized process, we introduce the notion of $(\delta, \epsilon)$-robustly safe. First, let $\phi$ denote a formula and $\epsilon$ a precision, define $N(\phi, -\epsilon)$ as the set $\{ \xx | \xx \in \phi \wedge \forall \yy \in \neg \phi. \|\xx-\yy\| > \epsilon\}$. Intuitively, when $\xx \in N(\phi, -\epsilon)$, then $\xx$ is inside $\phi$ and moreover the distance between it and the boundary of $\phi$ is greater than $\epsilon$.

\begin{definition}[$(\delta, \epsilon)$-robustly safe]
   An HCSP process $P$ is $(\delta, \epsilon)$-robustly safe, for a given initial state $v_0$, a time precision $\delta>0$ and a value precision $\epsilon>0$, if the following two conditions hold:
   \begin{itemize}
      \item for every continuous evolution $\evolutionn{\dot{\xx}=\ff(\xx)}{B}$ occurring in $P$, when  $P$ executes up to $\evolutionn{\dot{\xx}=\ff(\xx)}{B}$ at time $t$ with state $v$,  if $v(B) = false$, then there exists $\widehat{t} > t$ with $\widehat{t} -t < \delta$ s.t.  for any $\sigma$ satisfying $\dd(\sigma, v[\xx \mapsto X(\widehat{t}, \widetilde{\xx}_0)]) < \epsilon$, $\sigma\in N(\neg B, -\epsilon)$, where $X(t, \widetilde{\xx}_0)])$ is the solution of $\dot{\xx}=\ff(\xx)$ with initial value $\widetilde{\xx}_0  = v_0(\xx)$;
    \item for every alternative process $B \rightarrow P$ occurring in $S$, if $B$ depends on continuous variables of $P$, then when $P$ executes up to $B \rightarrow P$ at state $v$,  $v \in N(B, -\epsilon)$ or $v\in N(\neg B, -\epsilon)$.
   \end{itemize}

\end{definition}
As a result, when $P$ is discretized with a time error less than $\delta$ and a value error less than $\epsilon$, then $P$ and its discretized process have the same control flow.
The main theorem is given below.
\begin{theorem}
 Let $P$ be an HCSP process and $v_0$ is the initial state. Assume $P$ is $(\delta, \epsilon)$-robustly safe with respect to $v_0$. Let $0< \varepsilon < \epsilon$ be a precision. If for any  ODE $\dot{\xx}=\ff(\xx)$ occurring in $P$, $\ff$ is Lipschitz continuous and  $\dot{\xx}=\ff(\xx)$ is  GAS\ with $\ff(\bar{\xx})=0$ for some $\bar{\xx}$, then there exist $h >0$  and the equilibrium time for each ODE $F$ in $P$, $T_F>0$, s.t.   $P\cong_{h,\varepsilon}\discrete{h}{\varepsilon}{P}$.
 \label{theorem:app-process}
\end{theorem}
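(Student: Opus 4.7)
The plan is to proceed by structural induction on the syntax of $P$, constructing, for each case, an $(h,\varepsilon)$-approximate bisimulation relation $\mathcal{B}_{h,\varepsilon}$ between $T(P)$ and $T(\discrete{h}{\varepsilon}{P})$ in the sense of Definition~\ref{definition:appbisimulation}. The step size $h$ and the per-ODE equilibrium times $T_F$ will be chosen uniformly at the end, by taking the minimum $h$ (resp.\ maximum $T_F$) required at each occurrence of a continuous constructor, so that the conditions of Theorem~\ref{theorem:approximate} hold simultaneously everywhere, and $h\le \delta$ so that the time budget of $(\delta,\epsilon)$-robust safety is respected.

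For the atomic cases $\pskip$, $x:=e$, $\pwait\ d$, $ch?x$ and $ch!e$ the relation simply pairs matching sub-process states; the extra readiness assignments in the discretization are silent or commit $\tau$-steps that $\tau$-compress away, and the observable value vectors (projecting out the boolean readiness flags) coincide, giving trivially $\dd(H_1,H_2)=0\le\varepsilon$ and matching label distances $\labeld=0\le h$. For $P;Q$, $P\sqcap Q$, $P^{\ast}$ and $B\to P$ the induction hypothesis lifts directly by gluing the component bisimulations; for $B\to P$ the second clause of $(\delta,\epsilon)$-robust safety is invoked to guarantee that $B$ and $N(B,\varepsilon)$ agree on all pairs in $\mathcal{B}_{h,\varepsilon}$ (their states differ by at most $\varepsilon<\epsilon$). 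The external choice $\talloblong_{i\in I}io_i\to P_i$ is matched by first $\tau$-compressing the $\forall i.io_i:=1$ prefix and then doing the same for the resets after the chosen branch; since $I$ is finite all these are discrete transitions, and the IH on each $P_i$ completes the case. Parallel composition is handled by pairing product states: by construction each readiness variable is written by exactly one component, so no synchronization conflict is introduced, and synchronizations in $P_1\|P_2$ project to synchronizations in $\discrete{h}{\varepsilon}{P_1}\|\discrete{h}{\varepsilon}{P_2}$ (and vice versa) with identical $\tau$-compressed labels.

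The heart of the proof lies in $\evolutionn{\dot{\xx}=\ff(\xx)}{B}$ and $\exempt{\evolutionn{\dot{\xx}=\ff(\xx)}{B}}{i\in I}{io_i}{Q_i}$. For the pure evolution, Theorem~\ref{theorem:approximate} gives an $h$ and an equilibrium time $T$ such that the Euler trajectory stays within $\varepsilon$ of the true one on $[0,T]$ and is then frozen at $\bar{\xx}$; the pairs $(\evolutionn{\dot{\xx}=\ff(\xx)}{B},v)$ with the corresponding loop configuration of the discretization form the required bisimulation. The first clause of $(\delta,\epsilon)$-robust safety guarantees that the exit time from $B$ in the original and from $N(B,\varepsilon)$ in the discretization differ by at most $\delta\le h$, so delay labels match within $h$. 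For the interrupt, we refine this pairing: at each discrete time stamp we additionally require that the guard $\forall i.\ io_i\wedge\neg\overline{io_i}$ of the Euler loop reflects exactly the unavailability of a synchronization in the parallel partner, using the invariant that the flags $\overline{io_i}$, being written only by the partner's atomic assignments, correctly track its readiness at every reachable configuration. Under this invariant, a communication interrupt in $P$ is mimicked by exiting the Euler loop into the external-choice branch within time $h$; a domain exit is mimicked by the subsequent $\neg N(B,\varepsilon)$ branch; and the non-terminating continuous case is mimicked by $\xx:=\bar{\xx};\nstop$.

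The hard part will be maintaining a single inductive invariant for the compound behaviour of the interrupt under parallel composition: it must simultaneously control (i) the Euler drift $\|X(kh,\widetilde{\xx}_0)-\xx_k\|<\varepsilon$ on the whole interval up to $T$, (ii) the timing misalignment between events in $P$ and events in $\discrete{h}{\varepsilon}{P}$ within $h$, and (iii) the consistency of every readiness flag with its owner's current syntactic head. Once this invariant is shown to be preserved by every $\tau$-compressed transition (which reduces to a case split over the interrupt's transition rules plus Theorem~\ref{theorem:approximate} and the two robust-safety clauses), the three conditions of Definition~\ref{definition:appbisimulation} follow mechanically, and Definition~\ref{definition:TSappbisimulation} is satisfied by pairing the common initial state $v_0$, yielding $P\cong_{h,\varepsilon}\discrete{h}{\varepsilon}{P}$.
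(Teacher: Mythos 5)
Your overall plan---structural induction, Theorem~\ref{theorem:approximate} for the continuous cases, robust safety for branch and exit consistency---matches the skeleton of the paper's argument, but there is a genuine gap in how you treat compound constructs. You propose to ``glue'' component bisimulations for $P;Q$, $P^{\ast}$, etc., with the atomic cases pairing \emph{identical} states (``$\dd(H_1,H_2)=0$''). This does not go through: after the first component has run, the two sides are no longer in the same state but only $\varepsilon$-close, while the induction hypothesis you invoke (the theorem itself) only speaks of the two processes started from a \emph{common} initial state $v_0$. Worse, a subsequent discrete step can amplify an incoming error---an assignment $x:=e$ multiplies it by a bound on the partial derivatives of $e$, and a communication propagates it into the receiver---so a relation that merely requires the observation distance to stay below $\varepsilon$ is not preserved by the transitions of the sequel. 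This is precisely why the paper does not compose bisimulations directly: it proves a quantitative error-propagation fact, namely that if the accumulated error on reaching a sub-process $P_1$ is $M_1h$, then after executing $P_1$ and $\discrete{h}{\varepsilon}{P_1}$ it is $M_2h$ for some constant $M_2$ (Lagrange mean-value bounds for assignments, Theorem~\ref{theorem:globalerror} plus GAS for the ODE cases, constants multiplying across sequential composition and the finitely many iterations given by $\repnum$), and only at the very end chooses $h$ so small that the worst-case constant times $h$, together with the $\delta$-contribution ($\mathcal{L}\delta+Mh\le\varepsilon$), stays below $\varepsilon$. Your plan fixes $h$ only at the continuous constructors and never tracks how the error grows through the discrete ones, so the inductive invariant your bisimulation relations would need is never actually stated.

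A second, smaller problem is the timing of domain exits. You first stipulate $h\le\delta$ and later assert that the exit times from $B$ and from $N(B,\varepsilon)$ ``differ by at most $\delta\le h$''; these two statements contradict each other, and neither bound is what robust safety delivers. The first clause of robust safety gives a time $\widehat{t}$ with $\widehat{t}-t_f<\delta$ at which the trajectory is $\epsilon$-robustly outside $B$; the discretization can detect this only at the next grid point, so the mismatch is bounded by $h+\delta$ (the paper records this as $(1+\lceil\delta/h\rceil)h$), not by $h$. Hence the claim that ``delay labels match within $h$'' needs the same careful accounting of the time precision that the paper performs, rather than the assertion as given.
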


We can compute that, the relation
$\mathcal{L} \delta + Mh \leq \vare$  holds for some constants $\mathcal{L}$ and $M$. Especially, $\mathcal{L}$ is the maximum value of the first derivative of $\xx$ with respect to $t$.
More details can be found in ~\cite{reportpaper}.

\section{Case study}
\label{section:casestudy}
In this section, we illustrate  our method through the safety verification of the water tank system, $\textit{WTS}$, that is introduced in Sec.~\ref{section:tsofhcsp}. The safety property is to maintain the value of $d$ within $[low,high]$, which needs to compute the reachable set of $\textit{WTS}$. However, it is usually difficult because of the complexity of the system. Fortunately, the reachable set of the  discretized $\textit{WTS}_{h,\varepsilon}$ in Sec. \ref{section:discretizationofhcsp} could be easily obtained. Therefore, we can verify the original system $\textit{WTS}$ through the discretized one, $\textit{WTS}_{h,\varepsilon}$, as follows.

\begin{table}
\begin{center}
\vspace*{-5mm}
\begin{tabular}{c|c|c|c}
 \hline
 $\varepsilon$ & $h$  & $\textit{Reach}(\textit{WTS}_{h,\varepsilon})$ &$\textit{Reach}(\textit{WTS})$ \\
 \hline
 0.2 & 0.2 & [3.41, 6.5] &[3.21, 6.7] \\
 0.1 & 0.05 & [3.42, 6.47] &[3.32, 6.57] \\
 0.05 & 0.01 & [3.43, 6.46] &[3.38, 6.51] \\
 \hline
 \end{tabular}
\end{center}
\caption{The reachable set for different precisions}
\label{table:results}
\vspace*{-5mm}
\end{table}
In order to analyze the system, first of all, we set the values of parameters to $Q_{max}=2.0$, $\pi=3.14$, $r=0.18$, $g=9.8$, $p=1$, $lb=4.1$, $ub=5.9$, $low = 3.3$, $high=6.6$, $v_0=1$, and $d_0=4.5$ (units are omitted here). Then, by simulation, we compute the values of $\delta$ and $\epsilon$ as $0.5$ and $0.24$, s.t.  $\textit{WTS}$ is $(\delta, \epsilon)$-robustly safe.
By Theorem \ref{theorem:app-process}, for a given $\varepsilon$ with $0< \varepsilon < \epsilon$, since $\dot{d}$ and $d$ are monotonic for both ODEs, we can compute a $h>0$  s.t.  $\textit{WTS}\cong_{h,\varepsilon}\textit{WTS}_{h,\varepsilon}$. For different values of $\varepsilon$ and $h$, $\textit{Reach}(\textit{WTS}_{h,\varepsilon})$ could be computed, and then based on Theorem \ref{theorem:reachsetrelation}, we can obtain $\textit{Reach}(\textit{WTS})$. Table \ref{table:results} shows the results for different choices of $\varepsilon$ and $h$. As seen from the results, when the values of precisions become smaller,   $\textit{Reach}(\textit{WTS}_{h,\varepsilon})$ and $\textit{Reach}(\textit{WTS})$ get closer and tighter. For the smaller precisions, i.e., $(\varepsilon=0.1,h=0.05)$ and $(\varepsilon=0.05,h=0.01)$, the safety property of the system is proved to be true. However, for $(\varepsilon=0.2,h=0.2)$, the safety property of the system can not be promised.

\section{Conclusion}
\label{section:conclusion}

Approximate bisimulation is a useful notion for analyzing complex dynamic systems via simpler abstract systems.
In this paper, we define the approximate bisimulation of hybrid systems modelled  by HCSP, and present an algorithm for deciding whether two HCSP processes are approximately bisimilar. We have proved that if all the ODEs are \ggas, then the algorithm terminates in a finite number of steps. Furthermore, we define the discretization of HCSP processes, by representing the continuous dynamics by Euler approximation. We have proved  for an HCSP process that, if the process is robustly safe, and if each ODE occurring in the process is Lipschitz continuous and \ggas, then there must exist a discretization of the original HCSP process such that they are approximate bisimilar with the given precisions. Thus,   the results of analysis performed on the discrete system can be carried over into the original dynamic system, and vice versa.  At the end, we illustrate our method by presenting the discretization of a water tank example.
Note that \ggas \ and robust safety are very restrictive from a theoretical point of view, but most of real applications satisfy these conditions in practice.

Regarding future work, we will focus on the implementation, in particular, the transformation from HCSP to ANSI-C.
Moreover, it could be interesting to investigate  approximate bisimularity with time bounds so that the assumptions of GAS and robust safety can be dropped.
  In addition, it deserves to investigate richer refinement theories for HCSP based on the notion of approximately bisimulation,
  although itself can be seen as a refinement relation as discussed in process algebra.

\bibliographystyle{plain}
\bibliography{springerbk}
\newpage
\section*{Appendix}
\label{section:appendix}
\textbf{\emph{Proof of Lemma~\ref{lemma:union}}}: For any $(\qq_1,\qq_2)\in \bigcup_{i\in I}\mathcal{B}^i_{h, \varepsilon}$, there exists $i\in I$ such that $(\qq_1,\qq_2)\in \mathcal{B}^i_{h, \varepsilon}$. Then, $\textbf{d}(H_{1}(\qq_{1}),H_{2}(\qq_{2}))\le \varepsilon$. Moreover, for all $\compress{\qq_{1}}{l}{_1 \qq^{\prime}_{1}}$, there exists $\qq_{2}\xLongrightarrow{l^{\prime}}_{2} \qq^{\prime}_{2}$ such that $(\qq^{\prime}_1,\qq^{\prime}_2)\in \mathcal{B}^i_{h, \varepsilon} \subseteq \bigcup_{i\in I}\mathcal{B}^i_{h, \varepsilon}$ and $\labeld(l, l') \le h$, and for all $\compress{\qq_{2}}{l}{_2 \qq^{\prime}_{2}}$, there exists $\qq_{1}\xLongrightarrow{l^{\prime}}_{1} \qq^{\prime}_{1}$ such that $(\qq^{\prime}_1,\qq^{\prime}_2)\in \mathcal{B}^i_{h, \varepsilon} \subseteq \bigcup_{i\in I}\mathcal{B}^i_{h, \varepsilon}$ and $\labeld(l, l') \le h$. Therefore, $\bigcup_{i\in I}\mathcal{B}^i_{h, \varepsilon}$ is also a $(h,\varepsilon)$-approximate bisimulation relations between $T_1$ and $T_2$.\QEDB

\textbf{\emph{Proof of Theorem~\ref{theorem:reachsetrelation}}}: In order to prove Theorem \ref{theorem:reachsetrelation}, we need the following Lemma:
\begin{lemma}
If $T_{1}\cong_{h, \varepsilon}T_{2}$, then for all observation trajectory of $T_1$,
\[
  \begin{array}{l}
  \compress {\yy^0_1}{l_0}{\yy^1_1} \compress{}{l_1}{\yy^2_1} \compress{}{l_2}{...},
  \end{array}
\]
there exists an observation trajectory of $T_2$ with the sequence of labels
\[
  \begin{array}{l}
\yy^0_2\xLongrightarrow{l^{\prime}_0}\yy^1_2\xLongrightarrow{l^{\prime}_1}\yy^2_2\xLongrightarrow{l^{\prime}_2}...,
  \end{array}
\]
such that $\forall i\in \mathbb{N}$, $\textbf{d}(\yy^i_1,\yy^i_2)\le \varepsilon$ and $\labeld(l_i, l'_i) \le h$.
\label{theorem:trajectoryrelation}
\end{lemma}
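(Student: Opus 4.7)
The plan is to prove this by induction along the length of the $T_1$-trajectory, repeatedly invoking the simulation clauses of the approximate bisimulation relation. Note that the observation trajectories in the statement are written with the compressed arrow $\twoheadrightarrow$, so they come from compressed state trajectories, which matches the way Definition~\ref{definition:appbisimulation} is formulated on the $\tau$-compressed transition systems.

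First I would fix a witness for $T_1 \cong_{h,\varepsilon} T_2$: by Lemma~\ref{lemma:union} there exists a maximal $(h,\varepsilon)$-approximate bisimulation $\mathcal{B}_{h,\varepsilon}$ such that every initial state of $T_1$ is related to some initial state of $T_2$, and vice versa. Next, lift the statement from observations to states: given an observation trajectory $\yy^0_1 \twoheadrightarrow^{l_0} \yy^1_1 \twoheadrightarrow^{l_1} \cdots$ accepted by $T_1$, pick an underlying state trajectory $\qq^0_1 \twoheadrightarrow^{l_0} \qq^1_1 \twoheadrightarrow^{l_1} \cdots$ of $T_1$ with $\yy^i_1 = H_1(\qq^i_1)$ for every $i$. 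The target is to construct a matching state trajectory $\qq^0_2 \xLongrightarrow{l'_0} \qq^1_2 \xLongrightarrow{l'_1} \cdots$ in $T_2$ and set $\yy^i_2 = H_2(\qq^i_2)$.

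For the base case, $\qq^0_1 \in Q^0_1$ so the definition of $T_1 \cong_{h,\varepsilon} T_2$ gives some $\qq^0_2 \in Q^0_2$ with $(\qq^0_1, \qq^0_2) \in \mathcal{B}_{h,\varepsilon}$; clause~1 of Definition~\ref{definition:appbisimulation} yields $\dd(\yy^0_1, H_2(\qq^0_2)) \le \varepsilon$. For the inductive step, suppose I have produced $\qq^0_2, \ldots, \qq^i_2$ with $(\qq^j_1, \qq^j_2) \in \mathcal{B}_{h,\varepsilon}$ and $\labeld(l_j, l'_j) \le h$ for all $j < i$. Applied to the transition $\qq^i_1 \twoheadrightarrow^{l_i}_1 \qq^{i+1}_1$, clause~2 of Definition~\ref{definition:appbisimulation} produces a matching transition $\qq^i_2 \xLongrightarrow{l'_i}_2 \qq^{i+1}_2$ with $\labeld(l_i, l'_i) \le h$ and $(\qq^{i+1}_1, \qq^{i+1}_2) \in \mathcal{B}_{h,\varepsilon}$; clause~1 then gives $\dd(\yy^{i+1}_1, H_2(\qq^{i+1}_2)) \le \varepsilon$. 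Setting $\yy^{i+1}_2 = H_2(\qq^{i+1}_2)$ extends the construction one more step.

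For finite trajectories this is literal induction; for infinite ones it is an application of the axiom of dependent choice over the choice function that, at each step, picks one of the simulating $T_2$-transitions guaranteed by clause~2. Concatenating the resulting transitions gives the desired observation trajectory of $T_2$ with $\dd(\yy^i_1, \yy^i_2) \le \varepsilon$ and $\labeld(l_i, l'_i) \le h$ for all $i \in \mathbb{N}$. The main subtlety, rather than any hard estimate, is simply keeping the bookkeeping straight between the state level (where $\mathcal{B}_{h,\varepsilon}$ lives and where the simulation clauses apply) and the observation level (on which the lemma is phrased), together with ensuring that the compressed arrows $\twoheadrightarrow$ on the $T_1$ side are matched by the closure arrows $\xLongrightarrow{}$ on the $T_2$ side exactly as Definition~\ref{definition:appbisimulation} prescribes.
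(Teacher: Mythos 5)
Your proposal is correct and follows essentially the same route as the paper's own proof: lift the observation trajectory to an underlying state trajectory, match the initial states via the bisimilarity of the initial-state sets, and then inductively apply clause~2 of Definition~\ref{definition:appbisimulation} to build the simulating $T_2$-trajectory, reading off the observation bounds from clause~1. Your explicit appeal to dependent choice for infinite trajectories is a minor refinement the paper leaves implicit.
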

\begin{proof}
For $\compress {\yy^0_1}{l_0}{\yy^1_1} \compress{}{l_1}{\yy^2_1} \compress{}{l_2}{...}$, there exists a state trajectory in $T_1$, $\compress {\qq^0_1}{l_0}{\qq^1_1} \compress{}{l_1}{\qq^2_1} \compress{}{l_2}{...}$, such that $\forall i\in \mathbb{N}$, $H_1(\qq^i_1)=\yy^i_1$. For $\qq^0_1\in Q^0_1$, then there exists $\qq^0_2\in Q^0_2$ such that $(\qq^0_1,\qq^0_2)$ is in the $\mathcal{B}_{h,\varepsilon}$. With the second property of Def.~\ref{definition:appbisimulation}, it can be shown by induction that there exists a state trajectory of $T_2$, $\qq^0_2\xLongrightarrow{l'_0}\qq^1_2\xLongrightarrow{l'_1}\qq^2_2\xLongrightarrow{l'_2}...$, such that $\forall i\in \mathbb{N}$, $(\qq^i_1,\qq^i_2)\in \mathcal{B}_{h,\varepsilon}$. Let $\yy^0_2\xLongrightarrow{l'_0}\yy^1_2\xLongrightarrow{l'_1}\yy^2_2\xLongrightarrow{l'_2}...$ be the associated observation trajectory of $T_2$ ($\forall i\in \mathbb{N}$, $H_2(\qq^i_2)=\yy^i_2$). Then,
\[
  \begin{array}{l}
\textbf{d}(\yy^i_1,\yy^i_2)=\textbf{d}(H_1(\qq^i_1),H_2(\qq^i_2))\le \varepsilon \ and \ \labeld(l_i, l'_i) \le h
  \end{array}
\]
for all $i\in \mathbb{N}$. \QEDB
\end{proof}

Therefore, from the definition of $Reach(T)$ and Lemma~\ref{theorem:trajectoryrelation}, it is straightforward that Theorem 1 holds.\QEDB

\textbf{\emph{Proof of Theorem~\ref{theorem:algorithmcorrevtness}}}: In order to ensure the termination of Algorithm~\ref{alg:compt-bisi}, both of the repeat pieces should be proved to be ended in finite steps. For the first loop (lines 1-5), $T(P_m).T^i$ is increasingly constructed, until a fixed point where $T(P_m).T^i=T(P_m).T^{i-1}$ reached. Since $T(P_m).T^i$ collect the feasible transitions in $T(P_m)$, which is the transition system generated from $P_m$, we just need to prove that $P_m$ terminates within a bounded time, with a given time step $d$. We assume all communication actions are feasible, i.e., they could happen in a limited time interval. So all HCSP processes without ODEs can terminate within a finite time duration. For processes with continuous evolution statements, as all ODEs in $P_m$ are GAS, we know that for each ODE, there exists an equilibrium time $T^i_j$ for $j=1,...,ki$, which is used to construct the set $T(P_m).T^{i+1}$. If the ODE is interrupted before $T^i_j$, the continuous evolution will terminate before $T^i_j$, which is a bounded time. Otherwise, if the ODE keeps evolution until its equilibrium time, according to the constrain defined in the process of $T(P_m).T^{i+1}$ construction, $\second(q')(t^i_j) < T^i_j$, no more transitions will be generated after $T^i_j$, which means the ODE terminates at $T^i_j$. Therefore, we can conclude that all HCSP processes $P_m$ can terminate in a bounded time, with a given time step $d$ and the GAS assumption. That is to say, the first repeat part (lines 1-5) can terminate in finite steps. Moreover,  since $T(P_m).Q$ is directly derived from $T(P_m).T$, $T(P_m).T$ and $T(P_m).Q$ are both finite sets, which are used for constructing another finite set, $\mathcal{B}^0_{h, \varepsilon}$ (line 7) that includes all compositional states that the distance between them is not greater than $\varepsilon$. In the second repeat section (lines 8-11), from the construct process of $\mathcal{B}^i_{h, \varepsilon}$ and the fact that $T(P_m).T$ and $T(P_m).Q$ have finite elements, it is clear that it can reach a fixed point in a finite number of steps. In conclusion, Algorithm~\ref{alg:compt-bisi} terminates.

In order to prove the second part of Theorem~\ref{theorem:algorithmcorrevtness}, we need to prove that $\mathcal{B}_{h, \varepsilon}=\bigcap^{i=N}_{i=0}\mathcal{B}^{i}_{h, \varepsilon}$, in which $N$ is the repeat time for the computation of $\mathcal{B}^{i}_{h, \varepsilon}$, is an approximate bisimulation relation, moreover, it is the maximal one. Assume that the maximal bisimulation relation with $(h, \varepsilon)$ is $\mathcal{B}^{max}_{h, \varepsilon}$, therefore, $\bigcap^{i=N}_{i=0}\mathcal{B}^{i}_{h, \varepsilon}=\mathcal{B}^{max}_{h, \varepsilon}$ need to be proved, i.e., $\bigcap^{i=N}_{i=0}\mathcal{B}^{i}_{h, \varepsilon} \subseteq \mathcal{B}^{max}_{h, \varepsilon}$ and $\mathcal{B}^{max}_{h, \varepsilon}  \subseteq \bigcap^{i=N}_{i=0}\mathcal{B}^{i}_{h, \varepsilon}$ should be hold simultaneously. According to the computation of $\mathcal{B}^{0}_{h, \varepsilon}$, $\mathcal{B}^{i+1}_{h, \varepsilon}$ and $\mathcal{B}_{h, \varepsilon}$, it is clear that $\mathcal{B}^{max}_{h, \varepsilon}  \subseteq \bigcap^{i=N}_{i=0}\mathcal{B}^{i}_{h, \varepsilon}$. Hence, we just need to show $\bigcap^{i=N}_{i=0}\mathcal{B}^{i}_{h, \varepsilon} \subseteq \mathcal{B}^{max}_{h, \varepsilon}$, i.e., $\bigcap^{i=N}_{i=0}\mathcal{B}^{i}_{h, \varepsilon}$ is a $(h, \varepsilon)$-approximate simulation relation between $T_1$ and $T_2$. For any $(q_1,q_2)\in \bigcap^{i=N}_{i=0}\mathcal{B}^{i}_{h, \varepsilon}$, then particularly $(q_1,q_2)\in \mathcal{B}^{0}_{h, \varepsilon}$. Hence, $\textbf{d}(H_{1}(q_{1}),H_{2}(q_{2}))\le \varepsilon$. As the sequence $\{\mathcal{B}^{i}_{h, \varepsilon}\}_{i\in [0,N]}$ is decreasing and approach a fixed point as $i$ increasing to $N$, so, for $N-1$, $\mathcal{B}^{N-1}_{h, \varepsilon}=\mathcal{B}^{N}_{h, \varepsilon}=\bigcap^{i=N}_{i=0}\mathcal{B}^{i}_{h, \varepsilon}$, Therefore, $\mathcal{B}^{N}_{h, \varepsilon}$ could be defined by
\[
\begin{array}{l}
  \mathcal{B}^{N}_{h, \varepsilon}=\{\textbf{d}(H_{1}(q_{1}),H_{2}(q_{2}))\le \varepsilon \ and \ \compress{\forall q_{1}}{l}{_1 q^{\prime}_{1}}, \ \exists q_{2}\xLongrightarrow{l'}_{2} q^{\prime}_{2}\ such \ that \\ \ \ \ \ \ \ \ \ \ \ \ \ \ (q^{\prime}_{1},q^{\prime}_{2}) \in \mathcal{B}^{N-1}_{h, \varepsilon}=\mathcal{B}^{N}_{h, \varepsilon}\  and \ \labeld(l, l') \le h,\ and \ \compress{\forall q_{2}}{l}{_2 q^{\prime}_{2}}, \ \exists q_{1}\xLongrightarrow{l'}_{1} q^{\prime}_{1}\ \\  \ \ \ \ \ \ \ \ \ \ \ \ \ such \ that \ (q^{\prime}_{1},q^{\prime}_{2})\in \mathcal{B}^{N-1}_{h, \varepsilon}=\mathcal{B}^{N}_{h, \varepsilon}\ and \ \labeld(l, l') \le h\}
\end{array}
\]
for all $(q_1,q_2)\in \mathcal{B}^{N}_{h, \varepsilon}$. It follows that $\bigcap^{i=N}_{i=0}\mathcal{B}^{i}_{h, \varepsilon}$ is a $(h, \varepsilon)$-approximate simulation relation between $T_1$ and $T_2$
\QEDB

\textbf{\emph{Proof of Theorem~\ref{theorem:globalerror}}}: Let $\{\widehat{\xx}_i\}$ and $h_0$ the approximate sequence and step size respectively in Theorem \textbf{3} of~\cite{Platzer12}, i.e., $\widehat{\xx}_0=\widetilde{\xx}_0$, and for all $h$ satisfying $0<h\leq h_0$, and for all $n$ satisfying $nh \leq (T-t_0)$, the sequence $\widehat{\xx}_n = \widehat{\xx}_{n-1} + hf(\widehat{\xx}_{n-1})$ satisfies:
   \[\|X(nh,\widetilde{\xx}_0) - \widehat{\xx}_n\| \leq \frac{h}{2}\max_{\zeta \in [t_0, T]}{\|X''(\zeta)\|}\frac{e^{L(T-t_0)} -1}{L}\]
As $\|\widehat{\xx}_0-\xx_0\|\le \euler_1$, and $\widehat{\xx}_1 = \widehat{\xx}_{0} + hf(\widehat{\xx}_{0})$, $\xx_1 = \xx_0 + hf(\xx_0)$, and $\ff$ is Lipschitz-continuous with Lipschitz-constant $L$, it is easy to conclude that $\|\widehat{\xx}_1-\xx_1\|\le (Lh+1)\euler_1$. Similarity, it can be concluded that $\|\widehat{\xx}_2-\xx_2\|\le (Lh+1)^2\euler_1$. By induction, we have:
\[\|\widehat{\xx}_n-\xx_n\|\le (Lh+1)^n\euler_1\]
Therfore, it is to see that:
\[
\begin{array}{l}
\|X(nh,\widetilde{\xx}_0)-\xx_n\|\le \|X(nh,\widetilde{\xx}_0) - \widehat{\xx}_n\|+ \|\widehat{\xx}_n-\xx_n\|\\
\ \ \ \ \ \ \ \ \ \ \ \ \ \ \ \ \ \ \ \le(Lh+1)^n\euler_1 + \frac{h}{2}\max\limits_{\zeta \in [t_0, T]}{\|X''(\zeta)\|}\frac{e^{L(T-t_0)} -1}{L}\\
\ \ \ \ \ \ \ \ \ \ \ \ \ \ \ \ \ \ \ \le (e)^{nhL}\euler_1 + \frac{h}{2}\max\limits_{\zeta \in [t_0, T]}{\|X''(\zeta)\|}\frac{e^{L(T-t_0)} -1}{L}\\
\ \ \ \ \ \ \ \ \ \ \ \ \ \ \ \ \ \ \ \le (e)^{(T-t_0)L}\euler_1 + \frac{h}{2}\max\limits_{\zeta \in [t_0, T]}{\|X''(\zeta)\|}\frac{e^{L(T-t_0)} -1}{L},
\end{array}\]
since $\|\xx+\yy\|\le\|\xx\|+\|\yy\|$ and $0 < 1+Lh \le e^{Lh}$ for $Lh >0$. \QEDB

\textbf{\emph{Proof of Theorem~\ref{theorem:approximate}}}: As $\dot{\xx} = f(\xx)$ is GAS with the equilibrium point $\bar{\xx}$, for a given $\euler >0$, we know that there exists $T>0$ when $t>T-t_0$ holds, $\|X(t, \widetilde{\xx}_0)-\bar{\xx}\|<\euler$, and $X(t, \widetilde{\xx}_0) \to \bar{\xx}$ when $t \to \infty$. Since $N=\lceil \frac{T-t_0}{h}\rceil$, we know that after the execution of $N$ numbers of Euler expansion with $h$ step length, the ODE reaches $X(T^{\prime}, \widetilde{\xx}_0)$ with $T^{\prime}=Nh\ge T-t_0$, which means the ``distance'' between the ODE and the equilibrium point $\bar{\xx}$ will no more greater than $\euler$ after $T^{\prime}$. The structure of the discretized process indicates that the transition system generated from it is a deterministic one. Also, since the Lipschitz continuous condition is assumed, the transition system of the ODE is deterministic. Therefore, it is clear that the ODE and the discretized process is $(h,\euler)$-approximate bisimilar on $[T^{\prime},\infty]$. Next, we prove they are $(h,\euler)$-approximate bisimilar on $[t_0,T^{\prime}]$.

From Def.~\ref{definition:TSappbisimulation}, we know that if there exists a $(h,\euler)$-approximate bisimulation relation, $\mathcal{B}_{h,\euler}$, between the transition systems of the ODE and the discretized process such that $(\widetilde{\xx}_0,\xx_0) \in \mathcal{B}_{h,\euler}$, the continuous and discretized ones are $(h,\euler)$-approximate bisimilar. Since $\|\widetilde{\xx}_0 - \xx_0\| < \euler_1$, we assume $\euler_1 \le \euler$ here, which satisfied the first condition in Def.~\ref{definition:appbisimulation}. In order to illustrate the existence of $\mathcal{B}_{h,\euler}$ and $(\widetilde{\xx}_0,\xx_0) \in \mathcal{B}_{h,\euler}$, we just need to ensure that the ``distance'' between the ODE and the discretized process never greater than $\euler$ within every interval $[ih,(i+1)h]$ for $i \in [0,N-1]$. The reason is that the transition systems of the ODE and the discretized process are both deterministic, and only time delay and assignment labels occur. It is easily understood: for given $h$ and $\euler_1$, we can compute the approximation Euler sequence $\{\xx_0,\xx_1,...,\xx_N\}$, if for any $\xx_{i+1}$ with $i \in [0,N-1]$, the ``distance'' between $X(t_i,\widetilde{\xx}_0)$ and $\xx_{i+1}$, in which $t_i\in [ih,(i+1)h]$, is not greater than $\euler$, i.e., $\|X(t_i,\widetilde{\xx}_0)-\xx_{i+1}\|\le \euler$ for every $t_i$ in $[ih,(i+1)h]$, we can see that for any two states satisfy $\|X(t_i,\widetilde{\xx}_0),\xx_{i+1}\| \le \euler$, $\compress{\forall X(t_i,\widetilde{\xx}_0)}{t}{_1 X(t_i+t,\widetilde{\xx}_0)}$ with $t \in \RR^+$, $\exists \xx_{i+1}\xLongrightarrow{\lceil \frac{t_i+t-(i+1)h}{h} \rceil h}_{2} \xx_k$ with $k=\lceil \frac{t_i+t}{h} \rceil h$ such that $\|X(t_i+t,\widetilde{\xx}_0)-\xx_k\|\le \euler$ and $\| \lceil \frac{t_i+t-(i+1)h}{h} \rceil h-t \|\le h$ hold, and $\compress{\forall \xx_{i+1}}{\tau}{_2 \xx_{i+2}}$, $\compress{\exists X(t_{i+1},\widetilde{\xx}_0)}{\tau^0}{_1 X(t_{i+1},\widetilde{\xx}_0)}$ such that $\|X(t_{i+1},\widetilde{\xx}_0)-\xx_{i+2}\|\le \euler$, and $\compress{\forall \xx_{i+1}}{h}{_2 \xx_{i+1}}$, $\compress{\exists X(t_{i},\widetilde{\xx}_0)}{(i+1)h-t_i}{_1 X((i+1)h,\widetilde{\xx}_0)}$ such that $\|X((i+1)h,\widetilde{\xx}_0)-\xx_{i+1}\|\le \euler$ and $\| (i+1)h-t_i-h\|=\|ih-t_i\|\le h$. By induction, the $(h,\euler)$-approximate bisimilar on $[t_0,T^{\prime}]$ is proved.

As mentioned above, with the assumption that $\euler_1 \le \euler$ and the ``distance'' limitation, we can indicate that the continuous and the discretized process are $(h,\euler)$-approximate bisimilar. For any $\euler > 0$, it always can choose a $\euler_1$ which makes $0 < \euler_1 \le \euler$ holds. In the following, we will illustrate the existence of $h$ that satisfies the ``distance'' assumption, i.e., $\|X(t_i,\widetilde{\xx}_0)-\xx_{i+1}\|\le \euler$ for $t_i\in [ih,(i+1)h]$ and $i \in [0,N-1]$.

First of all, we have
\[
\|X(t_i,\widetilde{\xx}_0)-\xx_{i+1}\| \le \|X(t_i,\widetilde{\xx}_0)-X((i+1)h,\widetilde{\xx}_0)\| + \|X((i+1)h,\widetilde{\xx}_0)-\xx_{i+1}\|
\]
From Theorem~\ref{theorem:globalerror}, the following inequality holds.
\[
 \|X((i+1)h,\widetilde{\xx}_0)-\xx_{i+1}\| \le e^{(T^{\prime}-t_0)L}\euler_1 + \frac{h}{2}\max_{\zeta \in [t_0, T^{\prime}]}{\|X''(\zeta)\|}\frac{e^{L(T^{\prime}-t_0)} -1}{L}
\]
Accordingly, $\|X(t_i,\widetilde{\xx}_0)-X((i+1)h,\widetilde{\xx}_0)\|$ denotes the ``distance'' between $X((i+1)h,\widetilde{\xx}_0)$ and $X(r_ih,\widetilde{\xx}_0)$, with real number $r_i \in [i,i+1]$. From Theorem (\textbf{7.1.1}) of~\cite{Stoer13}, we know that $X(t,\widetilde{\xx}_0)$ is continuous and continuously differentiable on $[t_0,T^{\prime}]$, hence on every segment $[ih,(i+1)h]$ for $0\le i \le (N-1)$. From the Lagrange Mean Value Theorem, we have $\|X(r_ih,\widetilde{\xx}_0)-X((i+1)h,\widetilde{\xx}_0)\|=\Delta h \|\ff(\zeta_i)\|$, where $\Delta h=(i+1-r_i)h$ and $\zeta_i$ is a point between $r_ih$ and$(i+1)h$. For all $r_i \in [i,i+1]$, there must exist a $r^{max}_i \in [i,i+1]$ such that $\|X(r^{max}_ih,\widetilde{\xx}_0)-X((i+1)h,\widetilde{\xx}_0)\|=D_ih$, where $D_i$ is a constant, is the maximal value on the segment $[ih,(i+1)h]$. In other words, $\|X(r_ih,\widetilde{\xx}_0)-X((i+1)h,\widetilde{\xx}_0)\| \le D_ih$ for any $r_i \in [i,i+1]$.

So, the ``distance'' between $X(t_i,\widetilde{\xx}_0)$ and $\xx_{i+1}$ on $[ih,(i+1)h]$ can be bounded using
\[
\|X(t_i,\widetilde{\xx}_0)-\xx_{i+1}\| \le D_ih + e^{(T^{\prime}-t_0)L}\euler_1 + \frac{h}{2}\max_{\zeta \in [t_0, T^{\prime}]}{\|X''(\zeta)\|}\frac{e^{L(T^{\prime}-t_0)} -1}{L}
\]
Then, on the whole interval $[t_0,T^{\prime}]$, the ``distance'' is bounded by
\[
\|X(t_i,\widetilde{\xx}_0)-\xx_{i+1}\| \le Mh + e^{(T^{\prime}-t_0)L}\euler_1 + \frac{h}{2}\max_{\zeta \in [t_0, T^{\prime}]}{\|X''(\zeta)\|}\frac{e^{L(T^{\prime}-t_0)} -1}{L}
\]
where $M=max(D_0,D_1,...,D_{N-1})$.

In conclusion, if
\[
 Mh + e^{(T^{\prime}-t_0)L}\euler_1 + \frac{h}{2}\max_{\zeta \in [t_0, T^{\prime}]}{\|X''(\zeta)\|}\frac{e^{L(T^{\prime}-t_0)} -1}{L} \le \euler
\]
holds, we can say that the original ODE and the discretized process are $(h,\euler)$-approximate bisimilar on $[t_0,T^{\prime}]$. Since we can always choose small enough $h$ and $\euler_1$ to make the inequality satisfied, the theorem is true.
\QEDB

\textbf{\emph{Proof of Theorem~\ref{theorem:app-process}}}: First of all, with proper equilibrium time $T_F$ for each ODE $F$ of $P$, given a step size $h$, we prove that the global discretized error of $P$ is $Mh$ for some constant $M$. As a result, when $h$ is sufficiently small (e.g., $h <\frac{\vare}{M}$), $Mh < \vare$ is guaranteed.
Now assume $P$ and $\discrete{h}{\varepsilon}{P}$ start to execute from the same initial state $\sigma$. Suppose $P$ executes to $P_1$ with state $\sigma_1$, and in correspondence, $\discrete{h}{\varepsilon}{P}$ executes to $\discrete{h}{\varepsilon}{P_1}$ with some state $\beta_1$. Denote $\dd (\sigma_1, \beta_1)$ by  $\vare_1$, suppose $\vare_1 < \vare$ is $M_1h$ for some $M_1$, we prove that with $\vare_1$ as the initial error, after the execution of $P_1$ and $\discrete{h}{\varepsilon}{P_1}$,  the global error (denoted by $\vare_2$) is $M_2 h$ for some constant $M_2$. As a consequence, there must exist sufficiently small $h$ such that the global error of $P$ is less than $\vare$. Notice that for the special case when $P_1$ is $P$, $\vare_1$ is 0, and the above fact implies the theorem. The proof is given by structural induction on $P_1$.
\begin{itemize}
   \item Case $P_1=\pskip$: the discretized process is  $\pskip$. Obviously $\vare_2=\vare_1$.
   \item Case $P_1=(x:=e)$:  the discretized process  is  $x:=e$, where $e$ is an expression of variables, thus can be written as a function application of form $f(x_1, \cdots, x_n)$, among which $x_1, \cdots, x_n$ denote the variables occurring in $e$. After the assignment, only the value of $x$ is changed. Thus, from the definition of $\dd$, we have the fact $\vare_2 = \max(\vare_1, |a_2 - a_1|)$, in which $a_1 = \sigma_1(e)$ and $a_2 = \beta_1(e)$ represent the value of $x$ after the assignment. From the definition of $e$, $a_2 = f(\beta_1(x_1), \cdots, \beta_1(x_n))$. For each $i=1, \cdots, n$, there exists $\delta_i$ such that $\beta_1(x_i) = \sigma_1(x_i) + \delta_i$ and $|\delta_i| \leq \vare_1$. By the Lagrange Mean Value Theorem, the following equation holds:
       \[a_2= f(\sigma_1(x_1), \cdots, \sigma_1(x_n)) + \sum_{i=1}^n \frac{\partial f}{\partial x_i}(\sigma_1(x_i) + \theta \delta_i) \delta_i\]
       where $\theta \in (0, 1)$. From the fact $f(\sigma_1(x_1), \cdots, \sigma_1(x_n)) = a_1$,
       \[|a_2 - a_1| = |\sum_{i=1}^n \frac{\partial f}{\partial x_i}(\sigma_1(x_i) + \theta \delta_i) \delta_i| \leq \vare_1\sum_{i=1}^n\max_{o\in (\sigma_1(x_i) - \vare_1, \sigma_1(x_i) + \vare_1)}|\frac{\partial f}{\partial x_i}(o)| \]
       $n$ is a constant, and $\frac{\partial f}{\partial x_i}$ is bounded in the interval $(\sigma_1(x_i) - \vare_1, \sigma_1(x_i) + \vare_1)$, thus $|a_2 - a_1|$ is bounded by a multiplication of $\vare_1$ with a bounded constant. $\vare_2$ is the maximum of $\vare_1$ and this upper bound of $|a_2 - a_1|$. The fact holds obviously.
   \item Case $P_1=\pwait\ d$:   the discretized process  is  $\pwait\ d$. Obviously $\vare_2 = \vare_1$.
   \item Case $P_1=ch?x$:  the discretized process is $ch?:=1; ch?x; ch?:=0$. Notice that the auxiliary readiness variable $ch?$ is added in the discretized process, however, it will not introduce errors. Thus, we only consider the error between the common variables, i.e. process variables, of $P_1$ and its discretization.  There are two cases for the transitions of $P_1$ and $\discrete{h}{\varepsilon}{P_1}$. The first case is waiting for some time units. For this case, if the waiting time is finite, then let  the time durations for both sides be the same, $\vare_2 = \vare_1$ holds obviously; if the waiting time is infinite, indicating that a deadlock occurs, $\vare_2 = \vare_1$ holds also. For the finite case, at some time,
        an event $ch?c$ occurs, where $c$ is the value received, and as a consequence, $x$ is assigned to $c$. For both sides, let the value received, denoted by $c_1$ and $c_2$ respectively, satisfy $|c_1-c_2| \leq \widehat{M}\vare_1$ for some constant $\widehat{M}$. As a result, after the performance of the events $ch?c_1$ and $ch?c_2$ respectively, $\vare_2 = \max\{\vare_1, \widehat{M}\vare_1\}$.
   \item Case $P_1=ch!e$: the discretized process is $ch!:=1; ch!e; ch!:=0$. Same to input, there are two cases for the transitions of $ch!e$. For the first case, let the time duration for both sides be the same, thus $\vare_2 = \vare_1$ obviously. For the second case, the events $ch!\sigma_1(e)$ and $ch!\beta_1(e)$ occur, and from the proof for assignment, there must exist a constant $\widehat{M}$ such that $|\beta_1(e) - \sigma_1(e)| < \widehat{M} \vare_1$ holds. No variable is changed as a consequence of an output, thus, after the communication, $\vare_2 = \vare_1$ still holds.

   \item Case $P_1 = Q; Q'$: the discretized process is $\discrete{h}{\varepsilon}{Q}; \discrete{h}{\varepsilon}{Q'}$. By induction hypothesis, assume the error after the execution of $Q$ with initial error $\vare_1$ is $\vare_m$, then $\vare_2=M_3 \vare_m$ and $\vare_m=M_4h$ for some constants $M_3, M_4$. $\vare_2 = M_3M_4 h$ holds.

       \item Case $P_1 = B \rightarrow Q$: the discretized process is $B\rightarrow \discrete{h}{\varepsilon}{Q}$. From the assumption $\dd (\sigma_1, \beta_1) = \vare_1=M_1h$, then there exists sufficiently small $h$ such that $\vare_1 < \vare$. Let $\vare < \epsilon$, then $\dd (\sigma_1, \beta_1) < \epsilon$.   $P$ is $(h, \epsilon)$-robustly safe, thus if $\sigma_1(B)$ is true, from the definition that the distance between $\sigma_1$ and any state that makes $\neg B$ true is larger than $\epsilon$, we can prove that $\beta_1(B)$ must be true. For this case, $Q$ and $\discrete{h}{\varepsilon}{Q}$ will be executed. By induction hypothesis, we have $\vare_2=M_3h$ for some constant $M_3$. Likewise, if $\sigma_1(B)$ is false, then $\beta_1(B)$ must be false. For this case, $P_1$ terminates immediately.  By induction hypothesis, $\vare_2=  \vare_1$, thus the fact holds.

       \item Case $P_1 = Q \sqcap Q'$: the discretized process is $\discrete{h}{\varepsilon}{Q} \sqcap \discrete{h}{\varepsilon}{Q'}$. There are two cases for the execution of both $P_1$ and its descretized process. By making the same choice, suppose $Q$ and $\discrete{h}{\varepsilon}{Q}$ are chosen to execute.
           By induction hypothesis, we have $\vare_2=M _3h$ for some constant $M_3$. The other case when $Q'$ and $\discrete{h}{\varepsilon}{Q'}$ are chosen can be proved similarly.

       \item Case $P_1 =  \evolutionn{\dot{\xx}=\ff(\xx)}{B}$: Let $X(t, \sigma_1(\xx))$ represent the trajectory of  $\dot{\xx}=\ff(\xx)$ with initial value $\sigma_1(\xx)$.  $\dot{\xx}=\ff(\xx)$ is GAS with the  equilibrium point $\bar{\xx}$, then $\lim_{t \rightarrow \infty}\| X(t, \sigma_1(\xx))\| = \bar{\xx}$. There must exist $T$ such that when $t>T$, $ \| X(t, \sigma_1(\xx)) - \bar{\xx}\| < \vare$.  Before time $T$, by Theorem~\ref{theorem:globalerror} and the initial error $\vare_1$ is $M_1h$, thus the global error between $X(t, \sigma_1(\xx))$ and the corresponding discretized $\xx_n$  is $M_3h$ for some constant $M_3$ defined in  Theorem~\ref{theorem:globalerror}.
           Now we consider the escape of $P_1$ because of the failure of $B$, and how the discretized process behaves.

           There are two cases. First, if $B$ is always true along the trajectory $X(t, \sigma_1(\xx))$ on the infinite interval $[0, \infty)$, then $N(B, \vare)$ will be always true for the discretized points $\xx_n$. For this case, $P_1$ and the discretization will go close to the equilibrium point eventually, and the distance between them is always less than $\vare$ by choosing $h$ satisfying $M_3h < \vare$. Second,  if $B$ fails to hold for some $X(t_f, \sigma_1(\xx))$ at time $t_f$, then from the assumption that $P$ is $(\delta, \epsilon)$-robustly safe, there exists $\widehat{t}$ such that $\widehat{t} -t_f<\delta$ and for all $ \sigma$ satisfying $\dd(\sigma, \sigma_1[\xx\mapsto X(\widehat{t}, \sigma_1(\xx) )]) < \epsilon$, $\sigma \in N(\neg B, -\epsilon)$. Assume $\widehat{t} \in (t_N, t_{N+1}]$ for some $N$, so   $\dd(\sigma_1[\xx\mapsto X(\widehat{t}, \sigma_1(\xx) )], \beta_1[\xx \mapsto \xx_{N+1}])< M_3 h$. Let $h$ be sufficiently small such that $M_3 h < \epsilon$. Thus $\beta_1[\xx \mapsto \xx_{N+1}]\in(N(\neg B, -\epsilon))$, which implies $\beta_1[\xx \mapsto \xx_{N+1}](N(B, \vare))$ is false. As a result, the discretization of continuous evolution stops update correspondingly at time $t_{N+1}$. Thus we know that, the continuous evolution runs for $t_f -t_0$ time units in all and then escapes, and the discretization for $t_{N+1} - t_0$ time units, for the initial time $t_0$. Obviously the time precision $|t_{N+1} - t_f| \leq |t_{N+1} - \widehat{t}|+|\widehat{t} - t_f| \leq h + \delta \leq (1+\lceil \frac{\delta}{h}\rceil) h$ holds. Meanwhile, the value precision $\|X(t_f, \sigma_1(x)) - \xx_{N+1}\| \leq \|X(t_f, \sigma_1(\xx)) - X(\widehat{t}, \sigma_1(\xx))\| + \|X(\widehat{t}, \sigma_1(\xx)) - \xx_{N+1}\| \leq \max_{\xi \in [t_f, t_{N+1}]}\|{X'(\xi)} ( \\ t_f - t_{N+1})\| + M_3 h < \max_{\xi \in [t_f, t_{N+1}]}\|{X'(\xi)} (\delta + h)\| + M_3 h$. Thus by choosing $h$ sufficiently small, and with proper $\epsilon$, the error is less than $\vare$. The fact is thus proved for all the cases.


        \item  Case $P_1 = \exempt{\evolutionn{\dot{\xx}=\ff(\xx)}{B}}{i\in I}{io_i}{Q_i}$:  First of all, notice that in the discretization of $P_1$, the auxiliary variables $io_i, \overline{io_i}$ are added for assisting the execution of interruption. These variables do not introduce errors.  Let $X(t, \sigma_1(\xx))$ represent the trajectory of  $\dot{\xx}=\ff(\xx)$ with initial value $\sigma_1(\xx)$.  $\dot{\xx}=f(\xx)$ is  GAS with the  equilibrium point $\bar{\xx}$, then $\lim_{t \rightarrow \infty}\| X(t, \sigma_1(\xx))\| = \bar{\xx}$. There must exist $T$ such that when $t>T$, $ \| X(t, \sigma_1(\xx)) - \bar{\xx}\| < \vare$. According to the transition semantics of communication interrupt, there are several cases. If the communications $\{io_i\}$ never occur, then the execution of the communication interrupt is equal to the execution of the continuous evolution. Correspondingly, in the discretized process, the first, the second and the fourth lines are executed depending on whether the continuous evolution terminates or not. Similar to the proof of the continuous evolution, the fact holds for this case. Otherwise, there must exist time $T_c$ such that for the first time some communications $\{jo_j\}$ for $j \in J \subseteq I$ get ready simultaneously, while others in $I\backslash J$ are not. Let $T$ be sufficiently large such that $T_c < T$ holds. For this case, before time $T_c$, $P_1$ executes by following $\dot{\xx}=\ff(\xx)$, and at time $T_c$, the external choice between $\{jo_j\}$ for $j \in J \subseteq I$ occurs and after the communication the corresponding $Q_j$ is followed for some $j$. Correspondingly, in the discretized process, the first and the third lines are executed.  Suppose $T_c \in (x_{nn}, x_{nn+1}]$ for some $nn$, i.e. the time $T_c$ occurs in the $(nn+1)$-th discretized interval. Because the variables $io_i$s and $\overline{io_i}$s do not introduce errors, plus their definitions, we know that at $x_{nn+1}$, $\exists i. io_i \wedge \overline{io_i}$ is detected to turn to true. Before the communication, for $P_1$, the time delay $T_c - t_0$ occurs, and for the discretized process, the time delay $x_{nn+1} - t_0$ occurs, where $t_0$ denotes the initial time of $P_1$. Obviously $|(T_c - t_0) - (x_{nn+1} - t_0)| < h$ holds. The global error is $M_3h$ for some constant $M_3$ before time $T_c$ obviously.  When the continuous evolution is interrupted, according to the definition of the discretized process, the fact also holds by induction hypothesis.

        \item Case $P_1 = Q^*$: the discretized process is $\discrete{h}{\varepsilon}{Q}^*$. From the definition of repetition, there exists a finite $N >0$ such that $P_1 = Q^m$ and $m \leq N$. Let in the discretized process the upper bound of the number of repetition be also $N$. Denote the global error after the $n$-th ($n \leq N$) execution of $Q$ by $\omega_{n}$.  By induction hypothesis on $Q$ and $\discrete{h}{\varepsilon}{Q}$, there exist $\widehat{M}_i$s ($i \in \{1, \cdots, N\}$) such that
             $\omega_i = \widehat{M}_i h$ for $i= \{1, 2, \cdots, N\}$. The fact is proved.

        \item Case $P_1 = Q\|Q'$: the discretized process is $\discrete{h}{\varepsilon}{Q} \| \discrete{h}{\varepsilon}{Q'}$. The global error is the maximum of the errors for the discretization of $Q$ and $Q'$. According to the transition semantics of HCSP, there are several cases for execution of parallel composition. If no compatible communication events over the common channels of $Q$ and $Q'$ exist, there are three cases: if $Q$ takes a $\tau$ event, then $\discrete{h}{\varepsilon}{Q}$ is able to take a same $\tau$ event, and vice versa, by induction hypothesis, the global error after the $\tau$ event is $M_3h$ for some constant $M_3$ obviously; the symmetric case when $Q'$ and $\discrete{h}{\varepsilon}{Q'}$ take a $\tau$ event can be handled similarly; for the third case, if both $Q$ and $Q'$ take progress for $d$ time units, then $\discrete{h}{\varepsilon}{Q}\| \discrete{h}{\varepsilon}{Q'}$ is able to take progress for $d$ time units, and vice versa, by induction hypothesis, the global error after the $d$ time duration is $M_4h$ for some constant $M_4$ obviously.
        For the case that a communication over a common channel $ch$ of $Q$ and $Q'$  occurs, according to the semantics, there is some value $c$ such that the events $ch?c$ and $ch!c$ occur for the two sides of $P_1$ respectively. Correspondingly, there is another value $c'$ such that the events $ch?c'$ and $ch!c'$ occur for the two sides of $\discrete{h}{\varepsilon}{P_1}$ respectively. Similar to the proof of assignment, we can prove that there must exist some constant $M_5$ such that $|c-c'| < M_5\vare_1$ holds. This fact is implied in the proofs of input and output events. After the occurrence of the communication, the global error is thus $M_6h$ for some constant $M_6$.
                 Finally,  three cases are left, $Q$ terminates earlier, or $Q'$ terminates earlier, or they terminate simultaneously. For all the cases, at the termination of $P_1$, the fact holds obviously by induction hypothesis.

\end{itemize}
Till now, we prove that the global error of the discretization, say $\vare_g$, is $Mh$ for some constant $M$, under the premise that $T_F$ for each ODE $F$ is chosen such that $F$ after time $T_F$ is $\vare$-closed to the corresponding equilibrium point. There must exist a sufficiently large $h$ such that $\vare_g < \vare$ holds. The fact is thus proved.
\QEDB

\end{document}